\documentclass[a4paper,UKenglish,cleveref, autoref, thm-restate]{lipics-v2021}
\usepackage{amsmath, amssymb, amsthm, mathrsfs,amsfonts}
\usepackage{bbm}
\usepackage{capt-of}
\usepackage{hyperref}
\usepackage{mathtools}
\usepackage{tikz}
\usepackage{dsfont}
\usetikzlibrary{arrows.meta}
\usetikzlibrary{graphs}

\def\dimt {dim_{TH}}
\def\t #1{\text{#1}}
\def\dima {dim_{\mathcal{A}}}
\def\t #1{\text{#1}}
\def \A{\mathcal{A}}

\def\bg #1{\{G_i=(V,E_i)\}_{i\in[#1]}}
\def\bgp #1{\{G_i'=(V,E_i')\}_{i\in[#1]}}

\def\bgec #1{E(\overline{G_1})\cup E(\overline{G_1}) \cup \cdots \cup E(\overline{G_t})}

\title{Parameterized algorithms for representing graphs as intersections of threshold graphs} 

\author{L. Sunil Chandran}{Indian Institute of Science, India}{sunil@iisc.ac.in}{https://orcid.org/0000-0002-1825-0097}{}

\author{Suraj Kumar Sahoo}{Indian Institute of Science, India}{surajks@iisc.ac.in}{[orcid]}{}

\authorrunning{L. S. Chandran and S. K. Sahoo} 

 \ccsdesc[500]{Theory of computation~Fixed parameter tractability} 

\keywords{Intersection dimension of graphs, threshold graphs, EPAS, efficient parameterized approximation scheme, threshold dimension}

\acknowledgements{ChatGPT-6 Astra was used to generate Figure 1. }

\nolinenumbers 
\DOIPrefix{}
\hideLIPIcs
\begin{document}

\maketitle

\begin{abstract}
A graph is called a threshold graph if there is an assignment of real valued weights to its vertices such that two vertices have an edge between them if and only if the sum of the weights of the two vertices is above a certain `threshold value'. We will denote the class of all threshold graphs as $TH$.

For a graph $G=(V,E)$, if for some integer $t\geq 0$, there are $t$ many threshold graphs $\{G_i=(V,E_i)\}_{i\in [t]}$ such that we have $E=E_1\cap E_2\cap \cdots \cap E_t$, then we say that the collection of threshold graphs $\{G_i=(V,E_i)\}_{i\in [t]}$ forms a $TH$-representation of $G$ of size $t$. \emph{The intersection dimension of $G$ with respect to threshold graphs} (denoted as $dim_{TH}(G)$) is the smallest non-negative integer $t$ such that there is a $TH$-representation of $G$ of size $t$. 

It is known that for a graph $G$ and integer $t>0$, the problem of deciding whether $dim_{TH}(G)\leq t$ is NP-complete even for $t=3$. In this paper, we study this problem in the framework of parameterized algorithms. We consider the vertex cover number and the cluster vertex deletion number as our parameters. We give a FPT algorithm for determining an optimal $TH$-representation of a graph when parameterized by its vertex cover number. Our main result is an efficient parameterized approximation scheme for this problem when parameterized by the cluster vertex deletion number of $G$.
\end{abstract}
\section{Introduction}
In this paper all graphs are finite, simple and undirected. We will denote a graph $G$ defined on the vertex set $V$ and edge set $E$ as $G=(V,E).$ 

The study of dimensional properties of graphs was initiated by Cozzens and Roberts~\cite{cozzens1989dimensional}, who defined it to be a property $P$ such that any graph can be represented as the intersection of the edge sets of graphs (defined on the same set of vertices) that have the property $P$. The study of dimensional properties of graphs was continued by Kratochvil and Tuza~\cite{Kratochvil}, who showed that these are exactly the classes of graphs that contain all complete graphs and all graphs obtainable by removing one edge of a complete graph. Using this observation, one can define the intersection dimension of graphs as follows.
\begin{definition}[Intersection Dimension of a Graph \cite{Kratochvil}]\label{defdim}
    Let $G=(V,E)$ be a graph and $\mathcal{A}$ be a class of graphs that contains $K_n$ and $K_n\setminus e$ $\forall n\in \mathbb{N}$. \emph{The intersection dimension of $G$ with respect to $\mathcal{A}$} (denoted as $dim_{\mathcal{A}}(G)$) is defined as follows:
    $$dim_{\mathcal{A}}(G)=\min\left\{t\colon  \exists\t{ a collection of graphs }\bg{t}\subset \mathcal{A} \t{ such that }E=\bigcap_{i=1}^tE_i  \right\}.$$
\end{definition}

If $G$ is a complete graph, we take $\dima(G)=0$ by convention. For a graph $G$, we will study $\dima(G)$ when $\A$ is the class of \emph{threshold graphs}, denoted by $TH$. The class of threshold graphs was introduced by Chv\'atal and Hammer~\cite{chvatal1973set} and has applications in linear programming and set packing (see \cite{mahadev1995threshold} for a survey on threshold graphs). Several bounds on this parameter have been established, in terms of its treewidth, pathwidth, vertex cover number, maximum degree, and degeneracy~\cite{chacko2021representing,francis2025some}. Bounds for some special classes of graphs like planar graphs, chordal graphs, AT-free graphs, and co-comparability graphs have also been derived~\cite{chacko2021representing,francis2025some}. 

The problem of determining the intersection dimension of a graph with respect to threshold graphs is computationally difficult, in general. Yannakakis~\cite{Yannakakis} showed that for any integer $k\geq 3$, the problem of determining whether a graph $G$ has $\dimt(G)\leq k$ is NP-complete. However, there is a polynomial time algorithm that can determine whether $\dimt(G)\leq 2$~\cite{Ma,raschle1995recognition,sterbini1998n3}. The computational intractablity of this problem opens way for approaching this problem from the perspective of parameterized complexity. The study of the intersecion dimension of graphs from the view of parameterized algorithms remains quite unexplored.  
 Note that since it is NP-complete to determine if a graph $G$ has $\dimt(G)\leq 3$, parameterization with respect to $\dimt(G)$ is para-NP-hard. Therefore, we will use other parameters like vertex cover number and cluster vertex deletion number to obtain parameterized algorithms. To the best of our knowledge, our results are the first of their kind.

A cluster vertex deletion set is a set $X$ for which $G - X$ is a cluster graph (a cluster graph is a disjoint union of complete graphs). The cardinality of a smallest cluster vertex deletion set of a graph $G$ is called the cluster vertex deletion number of $G$, denoted as $cvd(G)$. A vertex cover is a set $X$ such that $G-X$ is a graph with no edges. The cardinality of a smallest vertex cover is called the vertex cover number of $G$, denoted as $\nu(G)$. Fix integers $k,t\geq 0$. It was shown in \cite{fellows2012well} that the graph classes $\mathcal{G}^k_{vc}=\{G\colon \nu(G)\leq k\}$ and $\mathcal{G}^k_{cvd}=\{G\colon cvd(G)\leq k\}$  are well-quasi ordered by induced subgraphs \footnote{The exact results in the paper are on vertex cover number and \textit{cograph deletion number}. The cograph deletion number of a graph $G$ is the minimum number of vertices that need to be deleted from $G$ so that the rest of the graph is a cograph (a cograph is a graph with no induced $P_4$). But cluster vertex deletion number is at least the cograph deletion number as every cluster is also a cograph. Therefore, the result translates to cluster vertex deletion number as well.}. Let   $\mathcal{H}$ be a hereditary class of graphs. Let $\mathcal{H}^k_{vc}=\mathcal{G}^k_{vc}\cap \mathcal{H}$ and $\mathcal{H}^k_{cvd}=\mathcal{G}^k_{cvd}\cap \mathcal{H}$. For $\mathcal{H}^k_{vc}$, there is a finite set of minimal forbidden graphs in $\mathcal{G}^k_{vc}$ characterizing its membership. This is because the forbidden graphs for $\mathcal{H}^k_{vc}$ that belong to $\mathcal{G}^k_{vc}$ form an antichain in $\mathcal{G}^k_{vc}$ and no such antichain can be infinite because of well quasi orderedness. Similarly, for $\mathcal{H}^k_{cvd}$, there is a finite set of minimal forbidden graphs in $\mathcal{G}^k_{cvd}$ characterizing its membership. The class of graphs $G$ with $\dimt(G)\leq t$ is hereditary. This demonstrates the existence of non-uniform algorithms to recognize graphs that are representable as the intersection of $t$ many threshold graphs given that these graphs belong to $\mathcal{G}^k_{vc}$ or $\mathcal{G}^k_{cvd}$. The aforementioned algorithms can be made uniform by bounding the possible order of the minimal forbidden graphs and exhaustively searching for these forbidden graphs. We devise algorithms that, in spirit, do exactly this. 
\subsection{Our Results}
Suppose $G$ is a graph on $n$ vertices.
\begin{enumerate}
    \item  We prove that the problem of computing a $TH$-representation of $G$ of minimum size is fixed parameter tractable when parameterized by the vertex cover of $G$.
    
    \item  We prove that the problem of computing a $TH$-representation of $G$ of minimum size is fixed parameter tractable when parameterized by both the cluster vertex deletion number of $G$ and $\dimt(G)$. We also provide an efficient parameterized approximation scheme for computing a $TH$-representation of any graph $G$ when parameterized by solely its cluster vertex deletion number.
    
\end{enumerate}
\section{Preliminaries}\label{prelim}
All the graphs in this paper are finite and simple. For a graph $G$, we denote by $\overline{G}$ the complement graph of $G$, i.e., the graph that has the same set of vertices $V(G)$ but with two distinct vertices adjacent if and only if they are not adjacent in $G$. Given graphs $G$ and $H$, we say that graph $H$ is an \textit{induced subgraph} of $G$ if $V(H)\subseteq V(G)$ and $E(H)=\{uv\in E(G)\colon u,v\in V(H) \}$. Given a graph $G$ and a subset of vertices $A\subseteq V(G)$, \textit{the graph induced on $A$ in $G$} is the graph with $A$ as the set of vertices and the set $\{uv\in E(G)\colon u,v\in A \}$ as the set of edges. Given graphs $G$ and $H$, and a subset of vertices $A\subseteq V(G)$, we say that \textit{$A$ induces the graph $H$ in $G$} if the graph induced on $A$ in $G$ is isomorphic to $H$. For a positive integer $x$, we will use the notation $[x]$ to indicate the set $\{1,2,\ldots, x\}$. We will denote by $P_n$, $C_n$, and $K_n$ the path, cycle, and complete graph on $n$ vertices, respectively. 

A collection of threshold graphs $\bg{t}$ such that $E_1\cap E_2\cdots \cap E_t=E$ will be called a \emph{$TH$-representation of $G$ of size $t$} or simply a \emph{$TH$-representation of $G$} when the size $t$ is clear from the context. 

Threshold graphs have many definitions. The following definition will be used most frequently in this paper.
\begin{definition}\label{defth}
    Let $s\in \mathbb{R}_{\geq 0}$ and let $V$ be a finite set. Let $w:V\rightarrow \mathbb{R}_{\geq 0}$ be a function. We define $\Gamma(s,w)$ to be the graph such that the vertex set of $\Gamma(s,w)$ is $V$ and for distinct vertices $u,v\in V$, $uv$ is an edge in $\Gamma(s,w)$ if and only if $w(u) + w(v) \geq  s$. A graph $G=(V,E)$ is said to be a threshold graph if for some $s\in \mathbb{R}_{\geq 0}$ and $w:V\rightarrow \mathbb{R}_{\geq 0}$, $G$ is isomorphic to $\Gamma(s,w)$. In such a case, $\Gamma(s,w)$ is said to be a \emph{threshold representation of $G$}. For a threshold representation $\Gamma(s,w)$ of $G$, $s$ is called its threshold value, and $w$ is called its weight function. Note that we will always implicitly assume that the vertex set of $\Gamma(s,w)$ is the same as the domain of $w$. The class of all threshold graphs is denoted by $TH$. 
\end{definition}
\begin{observation}\label{obsthresholdvaluesmall}
    Let $s\in \mathbb{R}_{\geq 0}$ and let $V$ be a finite set. Let $w:V\rightarrow \mathbb{R}_{\geq 0}$ be a function. Let $w':V\rightarrow \mathbb{R}_{\geq 0}$ be defined as $w'(v)=\min\{w(v),s\}$. Then $\Gamma(s,w)\cong \Gamma (s,w')$.
\end{observation}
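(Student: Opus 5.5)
The plan is to show that the identity map on $V$ is an isomorphism from $\Gamma(s,w)$ to $\Gamma(s,w')$. In other words, for every pair of distinct vertices $u,v\in V$ we will prove that
\[
w(u)+w(v)\geq s \iff w'(u)+w'(v)\geq s .
\]
Both graphs have the same vertex set $V$, which is the common domain of $w$ and $w'$. Since $w'(v)=\min\{w(v),s\}\geq 0$, the function $w'$ is a legitimate weight function into $\mathbb{R}_{\geq 0}$. So once this equivalence holds, the edge sets coincide and the two graphs are literally equal, hence isomorphic.

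\textbf{Reverse direction.} This is immediate from $w'(x)\leq w(x)$ for every $x\in V$: if $w'(u)+w'(v)\geq s$, then $w(u)+w(v)\geq w'(u)+w'(v)\geq s$.

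\textbf{Forward direction.} Assume $w(u)+w(v)\geq s$ and split into two cases.
\begin{itemize}
\item If neither weight exceeds $s$, then $w'(u)=w(u)$ and $w'(v)=w(v)$, so the inequality carries over unchanged.
\item If, say, $w(u)>s$, then $w'(u)=s$. Since $w'(v)\geq 0$ by nonnegativity, $w'(u)+w'(v)\geq s$.
\end{itemize}
This nonnegativity is the only point where the hypothesis $w\colon V\to\mathbb{R}_{\geq 0}$ is used. It is also the one place needing care: without nonnegativity, capping a large weight could destroy an edge to a vertex of negative weight.

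No other obstacle is expected. The argument is a two-case check on each pair, and the conclusion is in fact the stronger statement that $\Gamma(s,w)=\Gamma(s,w')$ as labelled graphs on $V$.
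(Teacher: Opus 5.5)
Your proof is correct and complete. The paper states this observation without proof, and your argument is the implicit justification: capping only lowers weights, and a capped vertex of weight $s$ already reaches the threshold against any nonnegative partner. Your check in fact shows the stronger statement that $\Gamma(s,w)=\Gamma(s,w')$ as labelled graphs on $V$.
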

Due to Observation~\ref{obsthresholdvaluesmall}, throughout this paper we will assume that the weight function of a threshold representation takes values smaller than or equal to its threshold value.

The following alternate definition will also be used.
\begin{definition}\label{orderdef}
    A threshold graph is a graph that can be constructed from a one-vertex graph iteratively by either adding a single isolated vertex to the graph or adding a single vertex that is connected to all other previously added vertices.
\end{definition}

  We also have the following characterizing theorem for threshold graphs.
\begin{theorem}[\cite{chvatal1973set}]\label{forbth} $TH$ is precisely the class of $(2K_2,C_4,P_4)$-free graphs.
\end{theorem}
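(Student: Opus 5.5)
We prove the two inclusions: every threshold graph (in the sense of Definition~\ref{defth}) is $(2K_2,C_4,P_4)$-free, and every $(2K_2,C_4,P_4)$-free graph is threshold. For the second direction we pass through the construction of Definition~\ref{orderdef}.

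\emph{Threshold implies forbidden-free.} Since the class is closed under induced subgraphs (restricting $w$ to a subset gives the induced subgraph), it suffices to show that none of $2K_2$, $C_4$, $P_4$ is of the form $\Gamma(s,w)$. Each of these three graphs on four vertices contains two edges $ab$ and $cd$ on disjoint vertex pairs such that $ac$ and $bd$ are non-edges:
\begin{itemize}
\item in $2K_2$, take the two edges;
\item in $C_4=abdc$ with diagonals $ad,bc$, take the opposite edges $ab$ and $cd$;
\item in $P_4=b\,a\,d\,c$ (path edges $ba$, $ad$, $dc$), take $ab$ and $cd$; then $ac$ and $bd$ are non-edges.
\end{itemize}
Then $w(a)+w(b)\ge s$ and $w(c)+w(d)\ge s$, so summing gives
\[
w(a)+w(b)+w(c)+w(d)\ge 2s.
\]
On the other hand, $w(a)+w(c)<s$ and $w(b)+w(d)<s$ give a sum strictly less than $2s$. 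This is a contradiction.

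\emph{Forbidden-free implies threshold.} Let $G$ be $(2K_2,C_4,P_4)$-free on $n\ge 2$ vertices. The key claim is that $G$ has an isolated vertex or a dominating vertex (one adjacent to all others). Take a vertex $u$ of maximum degree, and suppose it is not dominating, so some $x$ is not adjacent to $u$. If $x$ has a neighbour $y$, we argue as follows.
\begin{itemize}
\item If $y$ is not adjacent to $u$: since $\deg y\le\deg u$ and $y$ has neighbour $x\notin N(u)$, there is a neighbour $z$ of $u$ with $z\notin N(y)$. If $z$ is adjacent to $x$, then $u,z,x,y$ is an induced $P_4$. Otherwise $uz,xy$ is an induced $2K_2$.
\item If $y\in N(u)$: compare $N(x)$ and $N(u)$. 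Because $u,x$ are non-adjacent, any $z\in N(u)\setminus N[y]$ gives either an induced $P_4$ ($z$-$u$-$y$-$x$ if $zx\notin E$) or an induced $C_4$ ($u,y,x,z$ if $zx\in E$).
\end{itemize}
So every neighbour $y$ of $x$ lies in $N(u)$ and satisfies $N(u)\subseteq N[y]$.

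From this the cleanest way to finish is the standard argument that the neighbourhoods form a chain. In a $(2K_2,C_4,P_4)$-free graph, for any non-adjacent $a,b$ the sets $N(a),N(b)$ are comparable: a witness $c\in N(a)\setminus N(b)$ and $d\in N(b)\setminus N(a)$ yields a $2K_2$ if $cd\notin E$ and a $P_4$ ($a$-$c$-$d$-$b$) if $cd\in E$. Similarly, for any adjacent $a,b$ the sets $N[a],N[b]$ are comparable: witnesses $c,d$ give a $C_4$ if $cd\in E$ and a $P_4$ ($c$-$a$-$b$-$d$) if not. Now take $v$ of minimum degree. If $\deg v>0$, let $u$ be any vertex of maximum degree. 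If $u$ is not dominating, pick $x\notin N[u]$. By non-adjacent comparability $N(x)\subseteq N(u)$, and applying the same comparabilities to $v$ and $x$ produces a vertex of degree $0$ or forces $u$ to be dominating. This case chase is the main obstacle, and I would organize it via a vicinal preorder: the comparability facts show the vicinal preorder is total, which is the classical route to a dominating or isolated vertex.

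Having the claim, remove that vertex. The remaining graph is still forbidden-free, so by induction it is built as in Definition~\ref{orderdef}, and adding the removed vertex last as isolated or dominating gives $G$. Finally, construction sequences yield weights: process the vertices in the order added. Give an isolated-added vertex a weight smaller than any later dominating vertex could be compensated by, and a dominating-added vertex a weight large enough. Concretely, with $s=2^{n}$, the $i$-th added vertex gets weight $2^{n}-2^{i}$ if dominating and $2^{i}$ if isolated (adjusted so that all pairs compare correctly). Verifying that $w(u)+w(v)\ge s$ iff the later of $u,v$ was added as dominating is a routine calculation.
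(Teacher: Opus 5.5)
The paper does not prove this statement; it is quoted from Chv\'atal and Hammer, so your argument has to stand on its own. Its architecture is sound: the summation trick for one direction, and for the other an isolated-or-dominating vertex, induction, and weights. The first direction is fine up to a slip. In $C_4=abdc$ the pairs $ac,bd$ are edges, so the non-edge matching you need is the diagonals $\{ad,bc\}$, and with that pairing the summation goes through. The second direction, however, has two real holes.

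\textbf{The key claim is never proved.} You call the existence of an isolated or dominating vertex ``the main obstacle'' and defer it to the classical vicinal-preorder route. The detour through a minimum-degree vertex is never carried out. Yet your own first argument is one line from finished. You showed that every neighbour $y$ of $x$ satisfies $y\in N(u)$ and $N(u)\subseteq N[y]$. Since also $u,x\in N(y)$ while $u,x\notin N(u)$, you get $N(y)\supseteq (N(u)\setminus\{y\})\cup\{u,x\}$. Hence $\deg y\ge \deg u+1$, contradicting the maximality of $\deg u$. So $x$ has no neighbours: either $u$ is dominating or $G$ has an isolated vertex.

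\textbf{The weight formula is incorrect, not merely unverified.} Take $s=2^n$, dominating weight $2^n-2^i$ and isolated weight $2^i$. For $i<j$ with $v_j$ added as dominating and $v_i$ as isolated, you get $2^i+2^n-2^j<2^n$, a non-edge where an edge is required. Later dominating vertices must be heavier and later isolated ones lighter. The assignment used in the paper's proof of Corollary~\ref{integth1} works: $s=2n$, with $w(v_i)=n+i$ if $v_i$ was added dominating and $w(v_i)=n-i$ if added isolated. Then for $i<j$ one has $w(v_i)+w(v_j)\ge (n-i)+(n+j)>2n$ if $v_j$ is dominating, and $w(v_i)+w(v_j)\le (n+i)+(n-j)<2n$ if $v_j$ is isolated.
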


Suppose $G=(V,E)$ is a graph. A related problem in this context is the \textit{edge covering} problem that asks for the minimum number of threshold graphs $\bg{t}$ such that $E=\bigcup_{i\in [t]}E_i$. This number is \emph{the threshold cover number of $G$}, denoted as $cov_{TH}(G)$. Observe that computing the intersection dimension of $G$ with respect to threshold graphs is equivalent to computing the threshold cover number of $\overline{G}$ (the complement graph of any threshold graph is a threshold graph and therefore, if there exist threshold graphs $\bgp{t}$ with $E=\bigcap_{i\in [t]}E_i'$, then for $i\in [t]$, we can assign $G_i=\overline{G_i'}$ and get  $\overline{E}=\bigcup_{i\in [t]}\overline{E_i'}$).

The threshold cover number of various graphs was studied by Cozzens and Leibowitz~\cite{doi:10.1137/0605055} and many bounds were provided. We will be using the following result in our proofs.
\begin{theorem}[\cite{doi:10.1137/0605055}]\label{dimtgx}
    For $k\geq 2$, suppose $K(m_1, m_2,\ldots m_{k-1},m_k)$ be the complete $k$-partite graph with $m_1\leq  m_2\leq \cdots m_{k-1}\leq m_k$. Then the threshold cover number of $K(m_1, m_2,\ldots m_{k-1},m_k)$ is $m_{k-1}$. 
\end{theorem}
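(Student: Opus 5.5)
We prove the two inequalities separately: at least $m_{k-1}$ threshold graphs are needed, and $m_{k-1}$ of them suffice. Write the parts as $V_1,\dots,V_k$ with $|V_i|=m_i$, and let $G=K(m_1,\dots,m_k)$.

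\textbf{Lower bound.} Pick $m_{k-1}$ vertices $a_1,\dots,a_{m_{k-1}}\in V_{k-1}$ and $m_{k-1}$ distinct vertices $b_1,\dots,b_{m_{k-1}}\in V_k$. This is possible because $m_k\ge m_{k-1}$. The edges $a_jb_j$ form a matching $M$ in $G$. For $j\neq l$, the vertices $a_j,a_l$ lie in the same part, so they are non-adjacent, and likewise $b_j,b_l$ are non-adjacent. Hence $\{a_j,b_j,a_l,b_l\}$ induces either $2K_2$ or $C_4$ in $G$, depending on whether $a_jb_l$ and $a_lb_j$ are edges; here they are edges, so it is a $C_4$.

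Now suppose $G_i\subseteq G$ is a threshold graph containing both $a_jb_j$ and $a_lb_l$. Since $G_i$ is a subgraph of $G$, it contains neither $a_ja_l$ nor $b_jb_l$. If both cross edges $a_jb_l$ and $a_lb_j$ were present in $G_i$, it would contain an induced $C_4$. If neither were present, it would contain an induced $2K_2$. If exactly one were present, it would contain an induced $P_4$. Each case contradicts Theorem~\ref{forbth}. So every threshold subgraph of $G$ covers at most one edge of $M$, and any threshold cover needs at least $|M|=m_{k-1}$ graphs.

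\textbf{Upper bound.} Enumerate $V_i=\{v^i_1,\dots,v^i_{m_i}\}$ for $i\le k-1$. For $j\in[m_{k-1}]$, let $H_j$ be the graph with the following edges:
\begin{itemize}
\item the star from $v^{k-1}_j$ to all of $V_k$;
\item for each $i<k-1$ with $j\le m_i$, the star from $v^i_j$ to all vertices of $V_{i+1}\cup\dots\cup V_k$.
\end{itemize}
Each such star uses only edges of $G$, because its centre and all its leaves lie in different parts.

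To see that $H_j$ is threshold, use Definition~\ref{orderdef}. Add all leaves of all the stars first, as isolated vertices. Then add the centres one at a time, in increasing order of the part index $i$, each as a vertex adjacent to everything present. This gives an iterated dominating/isolated construction. We must check that it produces exactly $H_j$. The centres $v^i_j$ and $v^{i'}_j$ with $i<i'$ should be adjacent, and they are: $v^{i'}_j$ lies in $V_{i'}$, which is a leaf set of the star centred at $v^i_j$. The later-added centre $v^{i'}_j$ is also adjacent to all leaves of the earlier stars. The leaves of its own star, $V_{i'+1}\cup\dots\cup V_k$, are a subset of the leaves of the earlier stars. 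This subset relation keeps the construction consistent.

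The main obstacle is arranging the centres so that each $H_j$ is a nested sequence of dominating vertices. For that reason I would define $H_j$ from the outset as the graph produced by the construction above, and then check directly that it uses only edges of $G$.

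\textbf{Coverage.} Take an edge $uv$ of $G$ with $u\in V_i$, $v\in V_{i'}$ and $i<i'$. Then $i\le k-1$, so $u=v^i_j$ for some $j\le m_i\le m_{k-1}$, and $uv$ lies in the star of $H_j$ centred at $u$. Hence the graphs $H_1,\dots,H_{m_{k-1}}$ cover $E(G)$, giving a threshold cover of size $m_{k-1}$.
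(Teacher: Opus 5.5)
The paper only cites this result (from Cozzens and Leibowitz) and gives no proof, so your argument has to stand on its own.

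Your lower bound is correct. It is the complemented form of the matching argument the paper uses for Theorem~\ref{thpoly} via Observation~\ref{4graph}. Your graphs $H_j$ are also the right ones, and the coverage argument is correct.

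The gap is the verification that $H_j$ is threshold: the isolated/dominating sequence you describe does not produce $H_j$. You add the leaves first and then the centres in \emph{increasing} order of $i$, each adjacent to everything present. Then a later centre $v^{i'}_j$ becomes adjacent to every leaf of an earlier star at $v^{i}_j$ with $i<i'$. That leaf set $V_{i+1}\cup\dots\cup V_k$ contains the rest of $V_{i'}$, which is the centre's own part. Already for $K(2,2,2)$ and $j=1$, your sequence makes $v^2_1$ adjacent to $v^2_2$. Your sentence ``the later-added centre is also adjacent to all leaves of the earlier stars'' is exactly where the argument breaks. The fallback of defining $H_j$ as the output of this construction would be fatal, since that graph is not a subgraph of $G$. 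The sequence is also ill-specified: later centres are themselves leaves of earlier stars, and the isolated vertices of the low-index parts are never added.

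To repair it, note that for $i<i'$ the nesting $V_{i'+1}\cup\dots\cup V_k\subseteq V_{i+1}\cup\dots\cup V_k$ means the higher-index centre has the smaller neighbourhood. So it must be added \emph{earlier}. Let $i_0$ be the least index with $m_{i_0}\ge j$. Since the $m_i$ are nondecreasing, the centres are exactly $v^i_j$ for $i_0\le i\le k-1$. Build $H_j$ as follows:
\begin{itemize}
\item add $V_k$ as isolated vertices;
\item for $i=k-1,k-2,\dots,i_0$, add $v^i_j$ as a dominating vertex, followed by $V_i\setminus\{v^i_j\}$ as isolated vertices;
\item finally add $V_1\cup\dots\cup V_{i_0-1}$ as isolated vertices.
\end{itemize}
Equivalently, $H_j=\Gamma(2k+1,w)$ with $w(v^i_j)=2k-i$ for centres and $w(x)=m$ for every other $x\in V_m$. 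Then a centre $v^i_j$ and a non-centre in $V_m$ are adjacent iff $m\ge i+1$. Two centres have weight sum at least $2k+3$, and two non-centres have sum at most $2k$. With this repair your proof is complete.
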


 Note that there are at most $2^{O(n\log n )}$ threshold graphs on $n$ vertices and the set of all threshold graphs can be enumerated in $2^{O(n\log n )}n^{O(1)}$ time. This is because threshold graphs are a subclass of interval graphs and there are at most $2^{O(n\log n )}$ interval graphs that can be enumerated in $2^{O(n\log n )}n^{O(1)}$ time (see the discussion in Section 2.1 of \cite{Para}). Since the class $TH$ is recognizable in time polynomial in $n$ for any graph $G$ on $n$ vertices, we can simply enumerate all graphs containing $G$  and determine if they belong to $TH$ in time $2^{O(n\log n )}n^{O(1)})$ and we can pick $t$ many graphs from this collection that form a $TH$-representation of $G$. This step requires us to guess a $t$ and enumerate all possible $\binom{2^{O(n\log n )}}{t}\leq2^{O(nt\log n )}n^{O(1)}$ choices and check if intersection of the edge sets of these graphs is equal to the edge set of $G$. 

\section{An FPT algorithm with respect to vertex cover}
Suppose $G=(V,E)$ is a graph and suppose $X$ is a vertex cover of $G$. For every subset $A\subseteq X$, we define the set $S_A$ as follows: for $u,v\in V\setminus X, $ if $N(u)=N(v)=A$ we put them in the set $S_A$. 
\begin{definition}
     We define $G_{vc}$ to be graph obtained by deleting all but 2 vertices from $S_A$ for each $A\subseteq X$ with $|S_A|\geq 2$. We will denote the set so obtained from $S_A$ by $S'_A$.  
\end{definition}

Since there are only $2^{|X|}$ many subsets of $X$, the total number of vertices in $G_{vc}$ is at most $ 2^{|X|+1}+|X|$. We will show that for any graph $G$, $\dimt(G)=\dimt(G_{vc})$ and that we can obtain an optimal $TH$-representation of $G$ from an optimal $TH$-representation $G_{vc}$ in polynomial time.

\begin{lemma}\label{w}
    Suppose $\dimt(G_{vc})\leq t'$. Then there is a $TH$-representation $\left\{\Gamma(s_i,w_i)\right\}_{i\in [t']}$ of $G_{vc}$ such that $\forall j\in [t']$, $\forall A\subseteq X$ and $\forall u,v\in S_A'$, the weight function $w_j$ of $\Gamma(s_j,w_j)$ satisfies $w_j(u)=w_j(v)$. 
\end{lemma}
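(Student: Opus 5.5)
Start from an arbitrary $TH$-representation $\{\Gamma(s_j,w_j)\}_{j\in[t']}$ of $G_{vc}$. By Observation~\ref{obsthresholdvaluesmall}, every weight may be assumed to be at most its threshold. Then modify the weight functions one class at a time, for each $A\subseteq X$ with $|S_A'|=2$, say $S_A'=\{u,v\}$. We cannot simply average the weights of $u$ and $v$, since edges $ux$ with $x\in A$ must be preserved in every $\Gamma(s_j,w_j)$. The safe choice is to raise the smaller weight to the larger one: in each $j$, set $w_j'(u)=w_j'(v)=\max\{w_j(u),w_j(v)\}$ and leave all other weights unchanged.

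First, every edge of $G_{vc}$ is still present in every graph. Raising a weight never deletes an edge of $\Gamma(s_j,w_j)$, and weights stay at most $s_j$ because the maximum of two such values is at most $s_j$.

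Second, no non-edge of $G_{vc}$ is now present in all $t'$ graphs. Suppose some pair $yz$ is a non-edge of $G_{vc}$ but an edge of every modified graph. Only pairs involving $u$ or $v$ could have gained edges, so assume $y=u$; the pair $uv$ itself is handled at the end. Then $z\notin A$ with $z\neq v$, because $u$ and $v$ are nonadjacent twins with neighbourhood $A$ and $X$ is a vertex cover. Hence $z\notin N(v)$, so $vz$ is also a non-edge of $G_{vc}$. For each $j$, either $w_j(u)\ge w_j(v)$, in which case $u$'s weight is unchanged and $uz\in E(\Gamma(s_j,w_j))$ already; or $w_j(v)>w_j(u)$, in which case $u$'s new weight equals $w_j(v)$ and so $vz\in E(\Gamma(s_j,w_j))$. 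In either case the edge $uz$ in the new graph $j$ comes from $uz$ or $vz$ being an edge in the old graph $j$. This alone does not give a contradiction, since the old witness could be $uz$ for some $j$ and $vz$ for others. So the naive max fails, and I need a global rather than per-coordinate choice.

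The fix is to make $v$ a clone of $u$: set $w_j'(v)=w_j(u)$ for all $j$, leaving every other weight unchanged. Now any pair $vz$ with $z\neq u$ lies in the new graph $j$ exactly when $uz$ lies in the old graph $j$, and pairs not involving $v$ are untouched. Since $N(u)=N(v)=A$, this preserves all edges $vx$ with $x\in A$ and all non-edges $vz$ with $z\neq u$ in the intersection. The remaining pair is $uv$, which is a non-edge of $G_{vc}$ because both lie in the independent set $V\setminus X$. In the new graphs, $uv$ is an edge of graph $j$ exactly when $2w_j(u)\ge s_j$. This is the main obstacle, since there may be no single $j$ that kills $uv$.

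I would handle it by choosing which vertex to clone. Let $J_u=\{j:2w_j(u)<s_j\}$ and $J_v=\{j:2w_j(v)<s_j\}$. If either set is nonempty, clone the vertex whose set is nonempty, so the pair is killed there. If both are empty, then $2w_j(u)\ge s_j$ and $2w_j(v)\ge s_j$ for every $j$, hence $w_j(u)+w_j(v)\ge s_j$ for every $j$. That means $uv$ was already an edge of all original graphs, contradicting that the original family represents $G_{vc}$. Repeating this for each $A$, one class at a time with other weights untouched, gives the required representation.
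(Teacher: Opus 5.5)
Your final cloning argument, after you discard the max attempt, is correct, but it takes a different route from the paper. The paper works coordinate by coordinate and \emph{lowers} weights: for each $j$ and each $u\in S_A'$ it sets the new weight to $\min_{v\in S_A'} w_j'(v)$. Decreasing weights can only delete edges, so each new $\Gamma(s_j,w_j)$ is a subgraph of the old one. The intersection therefore stays inside $E(G_{vc})$ automatically, and the non-edge $uv$ and all other non-edges need no attention. The edges $ux$ with $x\in A$ survive because the minimizing vertex of $S_A'$ is itself adjacent to $x$ in every $\Gamma(s_j,w_j')$. This is the monotone dual of the max idea you rejected: max fails because raising weights can create spurious edges, whereas min can only destroy edges, and the only edges at risk are protected by the minimizer. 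Your approach instead copies one vertex's whole weight vector across all $t'$ coordinates. Then $v$ inherits exactly $u$'s adjacency pattern to every $z\neq u$, and the remaining pair $uv$ forces the global choice of which vertex to clone. That choice is justified by your observation that $J_u=J_v=\emptyset$ would put $uv$ in every original graph. The paper's version is shorter, handles all classes of any size simultaneously with no case analysis, and yields new graphs that are subgraphs of the original ones. Yours needs a class-by-class induction and the $J_u/J_v$ case split, but it uses an exact equivalence for the pairs $vz$ rather than monotonicity. It also extends to larger classes: clone a vertex with nonempty $J$ onto all the others, and such a vertex must exist, or two members of the class would be adjacent in every graph.
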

\begin{proof}
    Let $\left\{\Gamma(s_i,w_i')\right\}_{i\in [t']}$ be a $TH$-representation of $G_{vc}$. For $j\in [t']$, $\forall A\subseteq X$ and $\forall u\in S_A'$ define $w_{j}(u)\colon=\min_{v\in S_A'}w_j'(v)$ and for $x\in X$, define $w_{j}(x)\colon=w_j'(x)$.  
    It is clear that the graphs $\left\{\Gamma(s_i,w_i)\right\}_{i\in [t']}$ satisfy the requirements of the theorem. We claim that it is a $TH$-representation of $G_{vc}$. 
     
    For $j\in [t']$, it is clear that $\Gamma(s_j,w_j)$ is a subgraph of $\Gamma(s_j,w_j')$ because the weights of vertices in $S_A'$ are decreased and no new edges could have formed in such a process. Also $\Gamma(s_j,w_j)$ is a supergraph of $G_{vc}$ because the neighborhood in $ \Gamma(s_j,w_j')$ of any vertex $v$ in $S_A'$ contains $N_{G_{vc}}(S_A')$, including the vertex with minimum weight.  
\end{proof}
\begin{lemma}\label{dteq}
   Let $G=(V,E)$ be a graph. From any $TH$-representation of $G_{vc}$, we can construct a $TH$-representation of $G$ of the same size. In particular, we have $\dimt(G)=\dimt(G_{vc})$.
\end{lemma}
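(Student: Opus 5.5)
Since $G_{vc}$ is an induced subgraph of $G$ and $\dimt$ is monotone under induced subgraphs (restricting each threshold graph of a representation to a vertex subset gives a threshold graph, as threshold graphs form a hereditary class), we immediately get $\dimt(G_{vc})\leq \dimt(G)$. The work is in the converse: extending a representation of $G_{vc}$ of size $t'$ to one of $G$ of the same size. I first apply Lemma~\ref{w} to replace the given representation by $\left\{\Gamma(s_i,w_i)\right\}_{i\in [t']}$ in which, for every $j\in[t']$ and every $A\subseteq X$, all vertices of $S_A'$ receive the same weight $w_j(S_A')$. The modification in Lemma~\ref{w} only decreases weights by taking minima, so it is clearly polynomial time.

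Next I extend each $w_j$ to all of $V$: for $x\in X$ and for vertices of $V\setminus X$ kept in $G_{vc}$, keep $w_j$; for a deleted vertex $u\in S_A\setminus S_A'$ (so $|S_A|\geq 3$ and $|S_A'|=2$), set $w_j(u)=w_j(S_A')$. If $|S_A|=1$ the single vertex is kept, so nothing needs doing. Each extended $\Gamma(s_j,w_j)$ is a threshold graph on $V$ by Definition~\ref{defth}. I then check $E=\bigcap_j E(\Gamma(s_j,w_j))$ by pair type. Pairs inside $X$, or pairs between $X$ and a vertex $u\in S_A$: $u$ has the same weights as a representative $u'\in S_A'$ and the same adjacency to $X$ in $G$, so the pair $\{x,u\}$ behaves exactly like $\{x,u'\}$, which is correctly represented in $G_{vc}$. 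Pairs of two vertices of $V\setminus X$: they are non-adjacent in $G$ (as $X$ is a vertex cover), so some $\Gamma(s_j,w_j)$ must miss the edge. If $u\in S_A$, $v\in S_B$ with $A\neq B$, pick representatives $u'\in S_A'$, $v'\in S_B'$; these are distinct non-adjacent vertices of $G_{vc}$, so some $j$ has $w_j(u')+w_j(v')<s_j$, and the weights are identical for $u,v$. If $u,v\in S_A$, then $|S_A|\geq2$ and $S_A'$ contains two distinct non-adjacent vertices with equal weight $w$, so some $j$ has $2w<s_j$, which again transfers to $u,v$.

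The only delicate point is this last case of two vertices in the same class, which is exactly why two representatives (rather than one) are retained in $S_A'$ and why Lemma~\ref{w} is needed to equalize their weights; once that is in place the rest is a routine case check. The construction is clearly polynomial, giving $\dimt(G)\leq\dimt(G_{vc})$ and hence equality.
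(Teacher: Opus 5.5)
Your proof is correct and follows essentially the same route as the paper. You equalize the weights on each $S_A'$ via Lemma~\ref{w}, give every vertex of $S_A$ that common weight, and check pairs by type, using the two retained vertices of $S_A'$ for the same-class case; you also make explicit the easy direction $\dimt(G_{vc})\le\dimt(G)$ via heredity, which the paper leaves implicit.
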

\begin{proof}
    Let $\dimt(G_{vc})\leq t'$. Consider a $TH$-representation $\left\{\Gamma(s_i,w_i')\right\}_{i\in [t']}$ of $G_{vc}$. By Lemma~\ref{w}, we can assume that $\forall j\in [t']$, $\forall A\subseteq X$ and $\forall u,v\in S_A'$ we have $w_j'(u)=w_j'(v)$. 
    
    For $A\subseteq X$ that satisfies $S_{A}'\neq \emptyset$, we will denote by $u_A$ an arbitrarily picked, but fixed, element of $S_A'$.
    For $j\in [t']$, define the weight function $w_j$ as follows: for $A\subseteq X$ and for $x\in S_A$, $w_j(x)\colon =w_j'(u_A)$ and for $x\in X$, $w_j(x)\colon =w_j'(x)$. We claim that $\left\{\Gamma(s_i,w_i)\right\}_{i\in [t']}$ is a $TH$-representation of $G$. 
   
      Let $x\in V\setminus X$ and $y\in X$. Suppose $xy\in E$. Let $x\in S_{A}$ for some $A\subseteq X$. Then $u_Ay$ is an edge in $G_{vc}$ as $N_G(x)=N_G(u_A)=N_{G_{vc}}(u_A)$. Thus, $\forall j\in [t']$, $w_j(x)+w_j(y)=w'_j(u_A)+w_j'(y)\geq s_j$. Now, suppose $xy\notin E$. If for some $A\subseteq X$ we have $x\in S_A$, then $\exists i\in [t']$ such that $w_i'(u_A)+w_i'(y)<s_i$ since $\left\{\Gamma(s_i,w_i')\right\}_{i\in [t']}$ is a $TH$-representation of $G_{vc}$. Consequently, $w_i(x)+w_i(y)=w_i'(u_A)+w_i'(y)<s_i$. 
   
    Finally, suppose for a pair of (not necessarily distinct) subsets $A,B\subseteq X$ we have $x\in S_A, y\in S_B$ with $x\neq y$. If $A\neq B$, $\exists j\in [t']$ such that $w_j'(u_A)+w_j'(u_B)<s_j$ as $\left\{\Gamma(s_i,w_i')\right\}_{i\in [t']}$ is a $TH$-representation of $G_{vc}$. Consequently, $w_j(x)+w_j(y)=w_j'(u_A)+w_j'(u_B)<s_j$. If $A=B$, then $x,y\in S_A$ and therefore, $|S_A'|\geq 2$. Let $u\in S_A'\setminus \{u_A\}$. Then $\exists j\in [t']$ such that $w_j'(u_A)+w_j'(u)<s_j$. Consequently, $w_j(x)+w_j(y)=w_j'(u_A)+w_j'(u_A)=w_j'(u_A)+w_j'(u)<s_j$, where the second equality is because of Lemma~\ref{w}. 

    Hence, $\left\{\Gamma(s_i,w_i)\right\}_{i\in [t']}$ is a $TH$-representation of $G$. Clearly this constructive procedure is also polynomial time. 
\end{proof}
It was shown in~\cite{francis2025some} that for any graph $G$ with minimum vertex cover number $k\geq 1$, we have $\dimt(G)\leq k$. By the discussion in Section~\ref{prelim}, the optimal $TH$-representation of $G_{vc}$ can be computed in $2^{O(k^22^k)}n^{O(1)}$ time. From this $TH$-representation, we can obtain an optimal $TH$-representation for $G_{vc}$ that satisfies the statement of Lemma~\ref{w} in polynomial time. This $TH$-representation can be extended to an optimal $TH$-representation of $G$ in polynomial time due to Lemma~\ref{dteq}. Hence, we have the following theorem. 

\begin{theorem}
    For a graph $G$ on $n$ vertices with vertex cover number $k$, there is an algorithm that constructs an optimal  $TH$-representation of $G$ in time $2^{O(k^22^k)}n^{O(1)}$.
\end{theorem}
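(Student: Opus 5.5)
The algorithm has four stages: find a vertex cover, compress $G$ to the kernel $G_{vc}$, brute-force an optimal $TH$-representation of the kernel, and lift it back to $G$ via Lemma~\ref{dteq}. The running time is dominated by the brute-force stage.

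\textbf{Vertex cover.} We compute a minimum vertex cover $X$ of $G$ with $|X|=k$ using the standard $2^{k}n^{O(1)}$ branching algorithm. A $2$-approximation would only change constants in the exponent.

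\textbf{Compression.} For every $v\in V\setminus X$ we record $N(v)\subseteq X$. Since $V\setminus X$ is independent, $N(v)$ is exactly the set $A$ with $v\in S_A$, so this partitions $V\setminus X$ into the classes $S_A$ in polynomial time. From each class with $|S_A|\ge 2$ we keep two vertices, which gives $G_{vc}$. It has $N\le 2^{k+1}+k=O(2^k)$ vertices. By Lemma~\ref{dteq}, $\dimt(G)=\dimt(G_{vc})$, so it suffices to find an optimal representation of $G_{vc}$.

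\textbf{Brute force on the kernel.} By the cited result of \cite{francis2025some}, $\dimt(G)\le k$, and hence $\dimt(G_{vc})\le k$. We first enumerate all threshold supergraphs of $G_{vc}$ on its $N$ vertices, as described in Section~\ref{prelim}. This takes $2^{O(N\log N)}=2^{O(k2^k)}$ time. Then for $t=0,1,\dots,k$ in increasing order, we try every $t$-subset of this list and check whether the intersection of their edge sets equals $E(G_{vc})$. We stop at the first $t$ that succeeds. The number of subsets tried is at most
\[
\left(2^{O(k2^k)}\right)^{k}=2^{O(k^2 2^k)},
\]
and each check costs $N^{O(1)}$. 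Because we stop at the smallest feasible $t$ and $t\le k$ is guaranteed to be feasible, the representation found is optimal.

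\textbf{Lifting back to $G$.} Each threshold graph found is converted to an explicit representation $\Gamma(s_i,w_i')$ by computing weights from its construction order (Definition~\ref{orderdef}), which takes polynomial time. Next we apply the min-equalization of Lemma~\ref{w} so that vertices in the same $S'_A$ get equal weights. Finally, the construction in the proof of Lemma~\ref{dteq} copies the weight of $u_A$ to every vertex of $S_A$, producing a representation of $G$ of the same size in polynomial time. By Lemma~\ref{dteq} this size equals $\dimt(G)$, so the output is optimal. The total running time is $2^{O(k^22^k)}n^{O(1)}$.

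The only step requiring care is the counting in the brute-force stage. It is essential that the bound $\dimt\le k$ caps $t$; without it the exponent would not be a function of $k$ alone. The remaining steps follow directly from the lemmas already proved.
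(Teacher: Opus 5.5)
Your proposal is correct and follows the paper's argument. You compress to $G_{vc}$ with $O(2^k)$ vertices, brute-force over at most $k$ threshold supergraphs using $\dimt(G)\le k$ from \cite{francis2025some}, and lift back via Lemmas~\ref{w} and~\ref{dteq}, while also spelling out the minimum-vertex-cover computation and the count $(2^{O(k2^k)})^k=2^{O(k^22^k)}$, which the paper leaves implicit. One correction to an aside: with a $2$-approximate cover the kernel has $O(4^k)$ vertices and the bound becomes $2^{O(k^24^k)}$ rather than $2^{O(k^22^k)}$, so the exact cover you compute is genuinely needed; also, for $k=0$ your loop should go up to $t=1$, since an edgeless graph on at least two vertices has $\dimt=1$ and the cited bound assumes $k\ge 1$.
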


\section{An efficient parameterized approximation scheme with respect to cluster vertex deletion number}

Suppose $G$ is a graph and $X$ is a cluster vertex deletion set of $G$. Let $\mathcal{C}$ be the family of connected components of $G-X$. Since $X$ is a cluster vertex deletion set of $G$, $\mathcal{C}$ is a family of cliques. We will say that cliques $C,C'\in \mathcal{C}$ are equivalent, denoted by $C\sim C'$, if there is a bijective function $f\colon V(C)\rightarrow V(C')$ such that $\forall u\in V(C), N(u)\cap X=N(f(u))\cap X$. It is easy to see that $\sim$ defines an equivalence relation on $\mathcal{C}$. We will denote the equivalence class containing $C\in \mathcal{C}$ by $\langle C \rangle $. 

In this section, the general proof strategy is to produce a kernel for $G$ by bounding the number of vertices in each clique and the number of cliques in each equivalence class under $\sim$. In \cite{bruhn2016structural}, the authors show that the problem of finding the box-representation of a graph is FPT when parameterized by the cluster vertex deletion number. We will abstract out the key combinatorial idea in their proof and use it in our context to bound the number of cliques in each equivalence class under $\sim$. To bound the number of vertices in each clique, we will analyze the structural properties of the graph class $TH$. 

We first make some observations. 
\begin{observation}\label{atmonevc}
     Suppose $\Gamma(s,w)$ is a threshold graph and $C$ is a clique of $\Gamma(s,w)$. Then, there can be at most one vertex $v\in V(C)$ such that $w(v)<s/2$.
\end{observation}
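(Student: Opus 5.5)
The plan is a direct contradiction argument that uses only the edge rule of Definition~\ref{defth}. Suppose, to the contrary, that $C$ contains two distinct vertices $u,v$ with $w(u)<s/2$ and $w(v)<s/2$. Since $C$ is a clique of $\Gamma(s,w)$, the vertices $u$ and $v$ are adjacent in $\Gamma(s,w)$. By the definition of $\Gamma(s,w)$, this means $w(u)+w(v)\geq s$.

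Adding the two strict inequalities instead gives $w(u)+w(v)<s/2+s/2=s$, which contradicts the previous line. Hence at most one vertex of $C$ can have weight strictly below $s/2$.

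There is essentially no obstacle here. The only point to check is that the vertices are distinct, which the argument needs because Definition~\ref{defth} only makes $uv$ an edge for distinct $u,v$. A single vertex with $w(v)<s/2$ is allowed, so the bound of one is the best possible statement.
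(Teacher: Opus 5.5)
Your proof is correct and is the same argument the paper gives. Two distinct vertices of $C$ with weights below $s/2$ would have weight sum below $s$, which contradicts their adjacency in $\Gamma(s,w)$.
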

\begin{proof}[Proof of Observation]
   Otherwise two vertices $v$ and $v'$ are such that $w(v)<s/2$ and $w(v')<s/2$. Consequently, $w(v)+w(v')<s$ which is a contradiction to the fact that $v$ and $v'$ belong to the same clique.
\end{proof}
\begin{observation}\label{4graph}
    Suppose $G$ is a graph on 4 vertices $a,b,c,d$ such that $ab$ and $cd$ are edges. If there is a matching $M$ of size 2 in $\overline{G}$, then $\dimt(G)>1$.
\end{observation}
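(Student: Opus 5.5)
The plan is to show that $G$ is not itself a threshold graph. Since $G$ is not complete (it has non-edges), $\dimt(G)\neq 0$, and $\dimt(G)=1$ would mean exactly that $G$ is threshold. So it suffices to rule out any threshold representation $\Gamma(s,w)$ of $G$ on the vertex set $\{a,b,c,d\}$.

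First I will pin down which pairs form the matching $M$. On four vertices a matching of size $2$ is a perfect matching, so it is one of $\{ab,cd\}$, $\{ac,bd\}$ or $\{ad,bc\}$. The first is impossible, since $ab$ and $cd$ are edges of $G$ and hence not edges of $\overline{G}$. By symmetry (swap the names $c$ and $d$), we may assume $M=\{ac,bd\}$. Thus $ac$ and $bd$ are non-edges of $G$.

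Next comes a weight-sum argument. Suppose $G=\Gamma(s,w)$. The edges $ab$ and $cd$ give $w(a)+w(b)\geq s$ and $w(c)+w(d)\geq s$, so
\[
w(a)+w(b)+w(c)+w(d)\geq 2s .
\]
The non-edges $ac$ and $bd$ give $w(a)+w(c)<s$ and $w(b)+w(d)<s$, so the same total is $<2s$. This is a contradiction, so no threshold representation exists and $\dimt(G)>1$.

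As an alternative route, one can use Theorem~\ref{forbth}. With edges $ab,cd$ and non-edges $ac,bd$, the induced graph is $2K_2$, $P_4$ or $C_4$ according to which of $ad,bc$ are edges, so it is not threshold. There is no real obstacle here; the only care needed is the case analysis showing $M$ cannot be $\{ab,cd\}$.
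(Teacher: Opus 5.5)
Your proof is correct, and your main argument takes a different route from the paper's. The paper's proof is essentially your alternative paragraph in one line: it asserts that such a graph always contains an induced $2K_2$, $C_4$ or $P_4$ and invokes Theorem~\ref{forbth}.

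Your weight-sum argument instead works directly from Definition~\ref{defth}. You sum the inequalities for the edges $ab,cd$ and for the non-edges $ac,bd$, and the same total comes out both $\geq 2s$ and $<2s$. This has two advantages:
\begin{itemize}
\item it is self-contained, since it does not rely on the Chv\'atal--Hammer characterization;
\item it avoids the case split on whether $ad$ and $bc$ are edges, because only the four pairs $ab,cd,ac,bd$ enter the contradiction.
\end{itemize}
It is in fact the underlying reason why $2K_2$, $C_4$ and $P_4$ fail to be threshold. The paper's route is shorter given the cited theorem, but it leaves both the exclusion of $M=\{ab,cd\}$ and the three-way case check implicit. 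Your write-up makes both explicit.
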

\begin{proof}[Proof of Observation]
    We can always find a $2K_2, C_4$ or $P_4$ in such a graph. By Theorem~\ref{forbth}, $G$ cannot be a threshold graph.
\end{proof}
Co-bipartite graphs are the graphs $G$ with the property that $\overline{G}$ is bipartite.  
\begin{theorem}\label{thpoly}
     Suppose $G=(A,B,E)$ is a co-bipartite graph. If $k$ is the vertex cover number of $\overline{G}$, then $\dimt(G)= k$. Furthermore, we can find a $TH$-representation of $G$ of size $k$ in polynomial time.
\end{theorem}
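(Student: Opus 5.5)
Let $H=\overline{G}$, which is bipartite with sides $A$ and $B$ (both cliques in $G$). We prove $\dimt(G)\le k$ and $\dimt(G)\ge k$ separately. By K\H{o}nig's theorem, $k$ equals the size of a maximum matching in $H$, and a minimum vertex cover of $H$ is computable in polynomial time via bipartite matching. This gives the algorithmic part once the upper-bound construction is explicit.

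\emph{Upper bound.} Let $X=\{x_1,\dots,x_k\}$ be a minimum vertex cover of $H$. For each $x_i$ we build a threshold graph $G_i$ on $V$ whose non-edges are exactly the $H$-edges incident to $x_i$. Say $x_i\in A$; the case $x_i\in B$ is symmetric. Set $s_i=1$, $w_i(x_i)=0$, $w_i(v)=1$ for $v\in N_H(x_i)$, and $w_i(v)=1/2$ for all other vertices.
\begin{itemize}
\item The pairs with weight sum below $1$ are exactly the pairs $x_iu$ with $u\notin N_H(x_i)$.
\item This is the wrong set, so we adjust: give $x_i$ weight $0$, every vertex of $N_H(x_i)\subseteq B$ weight $1/2$, and every other vertex weight $1$.
\item Now the only pairs with sum $<1$ are $x_iu$ with $u\in N_H(x_i)$, since $1/2+1/2=1\ge 1$ and $0+1=1$.
\end{itemize}
So $G_i$ is complete minus the star at $x_i$ in $H$, which is threshold. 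Since $X$ covers every edge of $H$, $\bigcap_i E(G_i)=E(G)$. This is trivially polynomial time. (Equivalently, in cover language: stars are threshold, so $cov_{TH}(H)\le k$.)

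\emph{Lower bound.} Take a matching $M=\{a_1b_1,\dots,a_kb_k\}$ of size $k$ in $H$, with $a_j\in A$ and $b_j\in B$. For $j\ne l$, the vertices $a_j,a_l$ are adjacent in $G$ (same clique), and likewise $b_j,b_l$. The pairs $a_jb_j$ and $a_lb_l$ form a matching of size $2$ in $\overline{G}$ on these four vertices. By Observation~\ref{4graph}, no single threshold graph containing $E(G)$ can have both $a_jb_j$ and $a_lb_l$ as non-edges, since otherwise its restriction to $\{a_j,b_j,a_l,b_l\}$ would be non-threshold, contradicting that induced subgraphs of threshold graphs are threshold. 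Each non-edge $a_jb_j$ must be a non-edge of some graph in any $TH$-representation. Hence the $k$ matching non-edges require $k$ distinct graphs, giving $\dimt(G)\ge k$.

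The main subtlety is getting the weights right in the upper-bound construction. Applying Observation~\ref{4graph} in the lower bound is straightforward; the one thing to check there is that the edges $a_ja_l$ and $b_jb_l$ exist, which holds because $A$ and $B$ are cliques.
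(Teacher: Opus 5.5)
Your proof is correct and follows essentially the same route as the paper. The upper bound uses one threshold graph per vertex of a minimum vertex cover of $\overline{G}$; you give explicit weights for the complement of each star, where the paper phrases it as covering $\overline{G}$ by stars. The lower bound combines K\H{o}nig's theorem with Observation~\ref{4graph} applied to pairs of matching edges. In a write-up, drop your first (incorrect) weight assignment and keep only the corrected one: $w_i(x_i)=0$, weight $1/2$ on $N_{\overline{G}}(x_i)$, and weight $1$ elsewhere.
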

\begin{proof}
     Let $M$ be a maximum matching of $\overline{G}$. If $uv,u'v'\in M$ are distinct edges in $\overline{G}$, then by Observation~\ref{4graph}, no threshold graph that contains $G$ can contain the graph induced on $uvu'v'$ (any threshold graph containing $G$ is also co-bipartite) as an induced subgraph. Therefore, each threshold graph containing $G$ can contain at most one element of $M$ as a non-edge. Therefore, $\dimt(G)\geq k$. 
     
     We will show that $\dimt(G)\leq k$ by covering $\overline{G}$ by $k$ many threshold graphs. Let $X$ be a minimum vertex cover of $\overline{G}$. By a theorem of K\H{o}nig, for a bipartite graph, the matching number of a graph is equal to its vertex cover number. 
    Therefore, $|X|=k$. For every vertex $v\in X$, define the threshold graph $G_v'$ to be the graph induced by $N_{\overline{G}}[v]$ 
    in $\overline{G}$. Let $G_v$ be the threshold graph obtained from $G_v'$ by adding all the other vertices in $G$ as isolated vertices. Since $X$ is a vertex cover of $\overline{G}$, the edges of $\overline{G}$ are all covered by the edges of the graphs $\{G_v\}_{{v\in X}}$. 
    Therefore, $\dimt(G)\leq k$. The time taken to compute $\dimt(G)$ and a $TH$-representation of $G$ is the time taken to find the minimum vertex cover of $\overline{G}$. Since $\overline{G}$ is a bipartite graph, this can be done in polynomial time.      
\end{proof}

A threshold graph can have multiple threshold representations. The following theorem allows us to use a single representation for ease of exposition. 

\begin{lemma}\label{integth}
    Let $k$ be a positive integer. Suppose $G=(V,E)$ is a threshold graph and $V'$ is a subset of $V$ of order $k$. Then there is a threshold representation $\Gamma(2k+3,w)$ of $G$ such that $\forall v\in V-V'$, $w(v)\notin [2k+3]$ and $\forall v\in V', w(v)\in [2k+3]\setminus\{k+1,k+2\}$.
\end{lemma}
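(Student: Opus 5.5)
The plan is to start from an arbitrary threshold representation $\Gamma(s,w')$ of $G$ (with $w'\le s$, by Observation~\ref{obsthresholdvaluesmall}) and re-encode it by ranks. I split $V$ into \emph{low} vertices ($w'(v)<s/2$) and \emph{high} vertices ($w'(v)\ge s/2$). In $G$ the low vertices form an independent set and the high vertices form a clique, so only low--high pairs carry information. For such a pair $x$ low and $y$ high, adjacency holds iff $w'(x)\ge s-w'(y)$. Hence every high vertex is adjacent to an upward-closed set of low vertices (upward closed with respect to $w'$), and these neighbourhoods are nested.

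First I handle $V'$ alone. Let $L'$ and $H'$ be its low and high parts, so $|L'|+|H'|=k$. For $x\in L'$, let $p(x)\in[k]$ be the number of $x'\in L'$ with $w'(x')\ge w'(x)$. For $y\in H'$, let $r(y)=|N(y)\cap L'|\in\{0,\dots,k\}$. Set $w(x)=k+1-p(x)\in[1,k]$ and $w(y)=k+3+r(y)\in[k+3,2k+3]$; this avoids $k+1$ and $k+2$. By nestedness, $x\sim y$ iff $p(x)\le r(y)$, which holds iff $w(x)+w(y)\ge 2k+3$. Low--low sums are at most $2k<2k+3$ and high--high sums are at least $2k+6$. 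So all pairs inside $V'$ are correct, and the reserved gap at $k+1,k+2$ is exactly what separates the two sides.

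Next I extend to $V\setminus V'$ using non-integer values. I define $p(z)$ for every low $z$ (counted against $L'$) and $r(z)$ for every high $z$ (counted against $L'$) as above. I then place $z$ strictly inside the open unit interval just next to the integer slot it would have received: low $z$ goes in $(k-p(z),\,k+1-p(z))$, and high $z$ goes in $(k+2+r(z),\,k+3+r(z))$. Because $z\notin V'$ compares strictly before or after the $V'$ vertices in the nested order, these intervals already decide every pair with one endpoint in $V'$; this has to be checked by a short case analysis on ties.

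The main obstacle is pairs $z,z'$ both outside $V'$ that fall in the same pair of "integer cells", where the integer parts sum to exactly $2k+2$. There the fractional parts must reproduce the original comparison $w'(z)+w'(z')\ge s$. To handle this I set the fractional part of low $z$ to $\tfrac12+\epsilon\,(w'(z)-s/2)$ and of high $z$ to $\tfrac12+\epsilon\,(w'(z)-s/2)$, possibly after the affine normalisation $w'\mapsto w'-s/2$. Then the sum of the fractional parts is $1+\epsilon\,(w'(z)+w'(z')-s)$, which is at least $1$ iff the original pair is an edge. With $\epsilon$ small enough, all fractional parts stay in $(0,1)$, so no weight is an integer and no weight outside $V'$ lies in $[2k+3]$. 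The remaining work is to verify that pairs in non-matching cells have integer sums far enough from $2k+3$ that the fractional parts cannot flip them, and that the cells are consistent with the nested order, so that $w'(z)<w'(z')$ never contradicts the cell assignment.
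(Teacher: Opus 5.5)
Your proposal has a genuine gap, and it lies in how you assign cells to vertices outside $V'$.

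\textbf{A fixable slip.} In the $V'$-only step, $w(x)+w(y)=2k+4+r(y)-p(x)$. This is $\ge 2k+3$ iff $p(x)\le r(y)+1$, not iff $p(x)\le r(y)$. For example, the heaviest $x\in L'$ ($p(x)=1$) and a $y\in H'$ with $r(y)=0$ are non-adjacent, yet their sum is exactly $2k+3$. This alone is repairable, e.g.\ with $w(x)=|L'|+1-p(x)$ and $w(y)=2k+2-|L'|+r(y)$.

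\textbf{The structural failure.} The extension step breaks for a reason no repair of your formulas removes. The cell of a low $z\notin V'$ records only its position relative to $L'$. But whether $z$ is adjacent to a high $y\in H'$ depends on whether $w'(z)\ge s-w'(y)$. When no vertex of $L'$ has weight between these two numbers (so $p(z)=r(y)$), both outcomes occur. Since $w(y)$ is an integer and $w(z)$ lies in an open cell $(m,m+1)$ with $m$ an integer, $w(z)+w(y)$ lies strictly between two consecutive integers. So for fixed $y$, which side of $2k+3$ it falls on is decided by the cell of $z$ alone. No choice of fractional parts and no case analysis on ties can help.

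Concretely, take vertices $z_1,z_2,y$ with the single edge $yz_1$, represented by $s=10$, $w'(y)=6$, $w'(z_1)=4$, $w'(z_2)=1$, and let $V'=\{y\}$, $k=1$. Then $L'=\emptyset$ and $p(z_1)=p(z_2)=0$, so both vertices land in $(1,2)$. Hence $w(z_1)+w(y)$ and $w(z_2)+w(y)$ fall on the same side of $5$, although $z_1\sim y$ and $z_2\not\sim y$.

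\textbf{The missing idea.} The high vertices of $V'$ must also serve as cut points. The paper puts every vertex on a single scale: $g(v)=w'(v)$ for low $v$ and $g(v)=s-w'(v)$ for high $v$. Then a low $u$ and a high $v$ are adjacent iff $g(u)\ge g(v)$. Distinct $g$-values come for free from the construction order of Definition~\ref{orderdef} ($g(v_i)=n-i$, threshold $2n$).

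The paper then applies a strictly increasing piecewise-linear map $z$. It sends the $k$ values $g(V')$ to $1,\dots,k$ and every other value into an open cell $(j-1,j)$, $j\in[k+1]$. Finally it sets $w^*=z\circ g$ on low vertices and $w^*=2k+3-z\circ g$ on high vertices. Strict monotonicity of $z$ preserves every low--high comparison, including those between two vertices in the same cell. This makes your fractional-part device unnecessary.
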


The proof of this Lemma has been moved to the appendix.

From this point on, we will take $k$ as the cardinality of a cluster vertex deletion set $X$ of a graph $G$. Due to Lemma~\ref{integth}, we can assume that any threshold graph that contains $G$ as a subgraph must be of the form $\Gamma(2k+3,w)$, where $w$ satisfies the conditions of Lemma~\ref{integth}.
\begin{definition}\label{config}
    Suppose $G=(V,E)$ is a graph and suppose $w:V\rightarrow \mathbb{R}$ is a weight function. We define an indicator function $c(w,\cdot)$ as $c(w,v)=0$ if $w(v)<\frac{1}{2}(2k+3)$ and $c(w,v)=1$ otherwise. We define the function $g(w,\cdot)$ as $g(w,v)=w(v)$ if $c(w,v)=0$ and $g(w,v)=2k+3-w(v)$ if $c(w,v)=1$. Note that we will always implicitly assume that $c(w,\cdot)$ and $g(w,\cdot)$ are defined on the same domain as $w$.  
\end{definition}
\begin{remark}
    When $\Gamma(2k+3,w)$ is a representation of a threshold graph, the functions $c(w,\cdot)$ and $g(w,\cdot)$ encode important information about the behavior of the vertices of $G$. In fact, if for a vertex $v$, we know both the values $c(w,v)$ and $g(w,v)$, then we can easily deduce the value of $w(v)$. We will be crucially using these two functions in our analysis. The intuition behind using these functions comes from interval graphs. Every threshold graph with a representation $\Gamma(s,w)$ has a representation as interval graphs as follows: (i) if a vertex $v$ has $w(v)<s/2$ in $\Gamma(s,w)$, then we assign to it the point interval $[w(v),w(v)]$ and (ii) if a vertex $v$ in $G$ has $w(v)\geq s/2$, we assign it the interval $[s-w(v),s]$. Assuming that $w$ takes distinct values, it is easy to check that such a representation is a valid interval representation of $\Gamma(s,w)$ (see Figure~\ref{fig:1}). Note that the left endpoints of these intervals are exactly determined by $g(w,\cdot)$ when $s=2k+3$. Most of our analysis using the functions $g(w,\cdot)$ and $c(w,\cdot)$ is done with Figure~\ref{fig:1} in mind.
\end{remark}
\begin{figure}[!ht]
    \centering

\resizebox{0.75\textwidth}{!}{
\begin{tikzpicture}[
    x=0.9cm,y=0.95cm,
    font=\small,
    line cap=round,
    baselineguide/.style={draw=black!25,line width=0.35pt},
    endpoint/.style={line width=0.9pt},
    projection/.style={densely dashed,line width=0.65pt},
    rowlabel/.style={anchor=east,inner sep=0pt},
    every node/.style={inner sep=2pt}
]
\definecolor{edgeblue}{RGB}{24,96,153}
\definecolor{nonedgered}{RGB}{163,53,57}
\definecolor{intervalink}{RGB}{38,58,72}

\node[font=\normalsize] at (4.25,5.15)
  {$s=10,\qquad w(v)=2,\quad w(u)=4,\quad w(h)=7$};

\draw[baselineguide] (0,4.25)--(10,4.25);
\draw[baselineguide] (0,3.40)--(10,3.40);
\draw[baselineguide] (0,2.35)--(10,2.35);

\draw[projection] (2,1.25)--(2,4.25);
\draw[projection] (4,1.25)--(4,3.40);
\draw[projection,black!40] (3,1.25)--(3,2.35);
\draw[projection,black!40] (10,1.25)--(10,2.35);

\node[rowlabel] at (-0.38,4.25) {$I(v)=\{2\}$};
\node[anchor=west,fill=white] at (6.05,4.25) {$w(v)<s/2$};

\node[rowlabel] at (-0.38,3.40) {$I(u)=\{4\}$};
\node[anchor=west,fill=white] at (6.05,3.40) {$w(u)<s/2$};

\node[rowlabel] at (-0.38,2.35) {$I(h)=[3,10]$};
\draw[line width=2pt] (3,2.35)--(10,2.35);
\draw[endpoint] (3,2.22)--(3,2.48);
\draw[endpoint] (10,2.22)--(10,2.48);
\node[anchor=west,fill=white] at (3,2.73) {$s-w(h)=3$};
\node[anchor=east] at (10,2.73) {$w(h)\ge s/2$};

\draw[-{Stealth[length=4pt]},black!75,line width=0.65pt]
  (0,1.25)--(10.40,1.25);
\foreach \a in {0,2,3,4,5,10}
  {\draw[black!7,line width=0.6pt] (\a,1.18)--(\a,1.32);}
\foreach \a in {0,2,3,4}
  {\node[below] at (\a,1.15) {$\a$};}
\node[below] at (5,1.15) {$s/2=5$};
\node[below] at (10,1.15) {$s=10$};

\node[align=center] at (1.65,0.15)
  {$I(v)\cap I(h)=\varnothing$\\[2pt]
   $2+7<10\quad\Longrightarrow\quad vh\notin E(G)$};
\node[align=center] at (7.30,0.15)
  {$I(u)\cap I(h)=\{4\}$\\[2pt]
   $4+7\ge10\quad\Longrightarrow\quad uh\in E(G)$};

\end{tikzpicture}
}
    \caption{An example of an interval representation of a threshold graph. Filled dots are singleton intervals. All intervals lie on the same real
   line; rows are offset only for readability. Pale lines and dashed
   projections are guides, not additional intervals.}
    \label{fig:1}
\end{figure}

We can rehash the statement of Lemma~\ref{integth} in a more useful way by stating it in terms of the function $g(w,\cdot)$ and the cluster vertex deletion set as follows.

\begin{corollary}\label{integth1}
    Let $k$ be a positive integer. Suppose $G=(V,E)$ is a threshold graph and $X$ is a subset of $G$ of cardinality $k$. Then there is a threshold representation $\Gamma(2k+3,w)$ of $G$ such that the following three conditions hold: \begin{romanenumerate}
        \item $g(w,\cdot)$ takes distinct values.
        \item $\forall v\in X$,  $g(w,v)\in [k]$.
        \item $\forall v\in V-X$, $\exists j\in [k+1]$ such that $j-1< g(w,v)< j$.
    \end{romanenumerate}   
\end{corollary}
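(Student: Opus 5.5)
The plan is to re-encode any threshold representation through the pair of functions $c(w,\cdot)$ and $g(w,\cdot)$ of Definition~\ref{config}. The key observation is that, with threshold $s=2k+3$, the graph $\Gamma(s,w)$ is determined by $c(w,\cdot)$ together with the relative order of the values of $g(w,\cdot)$. Lemma~\ref{integth} alone does not give the three conditions directly: an $X$-vertex of weight $2k+3$ has $g=0\notin[k]$, and $X$-vertices of weights $j$ and $2k+3-j$ have equal $g$-values. So I would use it, or indeed any representation $\Gamma(2k+3,w)$ of $G$ (obtained by rescaling, with Observation~\ref{obsthresholdvaluesmall} giving $0\le w\le s$), only as a starting point and then rearrange the $g$-values.

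\textbf{Step 1 (edges via $c$ and $g$).} For distinct $u,v$ I would verify the following, using $g(w,v)=s-w(v)$ when $c(w,v)=1$.
\begin{itemize}
\item If $c(w,u)=c(w,v)=0$, then $w(u)+w(v)<s$, so $uv$ is a non-edge.
\item If $c(w,u)=c(w,v)=1$, then $w(u)+w(v)\ge s$, so $uv$ is an edge.
\item If $c(w,u)=0$ and $c(w,v)=1$, then $uv$ is an edge iff $w(u)\ge s-w(v)$, i.e.\ iff $g(w,u)\ge g(w,v)$.
\end{itemize}
Consequently, any weight function $w'$ with $c(w',\cdot)=c(w,\cdot)$ yields the same graph, provided that for every mixed pair ($c=0$ on $u$, $c=1$ on $v$) we have $g(w',u)\ge g(w',v)\iff g(w,u)\ge g(w,v)$.

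\textbf{Step 2 (a linear order).} I would fix a linear order $\prec$ on $V$ refining the weak order given by $g(w,\cdot)$. Ties are broken so that, among vertices with equal $g$-value, those with $c=1$ come before those with $c=0$; remaining ties are broken arbitrarily. Replacing $g$ by any strictly $\prec$-increasing function then preserves the mixed-pair comparison of Step~1, including the equality case, since a tie $g(w,u)=g(w,v)$ with $c(w,u)=0$, $c(w,v)=1$ corresponds to an edge and becomes $g'(v)<g'(u)$.

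\textbf{Step 3 (new values).} Let $x_1\prec\cdots\prec x_k$ be the vertices of $X$ in this order. I would set $g'(x_i)=i$. The non-$X$ vertices preceding $x_1$, lying between $x_i$ and $x_{i+1}$, or following $x_k$ receive distinct, $\prec$-increasing values in the open intervals $(0,1)$, $(i,i+1)$ and $(k,k+1)$ respectively. Then define
\[
w'(v)=\begin{cases} g'(v) & \text{if } c(w,v)=0,\\ 2k+3-g'(v) & \text{if } c(w,v)=1.\end{cases}
\]

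\textbf{Step 4 (verification).} Since $0<g'<k+1<k+\tfrac32$, a vertex with $c(w,v)=0$ has $w'(v)<k+\tfrac32$, while one with $c(w,v)=1$ has $w'(v)>k+2$. Hence $c(w',\cdot)=c(w,\cdot)$ and $g(w',\cdot)=g'$, with $0\le w'\le s$. By Steps~1 and~2, $\Gamma(2k+3,w')$ equals $G$. Conditions (i)--(iii) then hold by construction.

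The only delicate point is the tie-breaking in Step~2, which ensures that equalities $w(u)+w(v)=s$ (edges) are not turned into non-edges. The rest is routine.
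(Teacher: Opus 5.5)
Your proof is correct and follows essentially the paper's route: the paper's map $z$ is exactly your Step~3 re-indexing (vertices of $X$ sent to $1,\dots,k$ in order, all other vertices placed monotonically in the open gaps, $c$ kept unchanged), and the paper then checks edges case by case where you appeal to Step~1. The only real difference is where the strict order comes from: the paper builds $\Gamma(2n,w)$ from the construction sequence of Definition~\ref{orderdef} with $g(v_i)=n-i$, so $g$ is injective from the start and no ties arise, whereas you start from an arbitrary weight representation and handle the equality case $w(u)+w(v)=s$ with your ``$c=1$ first'' tie-break, which has the mild advantage of staying within Definition~\ref{defth} and not needing the equivalence of the two definitions of threshold graphs.
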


We will say that two vertices $u$ and $v$ of $G$ have the \emph{same configuration in $\Gamma(2k+3,w)$} if $\lceil g(w,u)\rceil=\lceil g(w,v)\rceil$ and $c(w,u)=c(w,v)$.
\begin{lemma}\label{gval}
         Suppose $G=(V,E)$ is a graph and $X$ is a cluster vertex deletion set of $G$ of cardinality $k$. Let $\Gamma(2k+3,w)$ be threshold representation of a threshold graph containing $G$ as a subgraph that satisfies conditions (ii) and (iii) of Corollary~\ref{integth1}. Suppose that two vertices $u$ and $v$ of $G-X$ have the same configuration in $\Gamma(2k+3,w)$.  
         Then in the graph  $\Gamma(2k+3,w)$, $u$ and $v$ have exactly the same set of neighbors and non-neighbors in $X$.
     \end{lemma}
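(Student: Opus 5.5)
We compare $w(u)+w(x)$ and $w(v)+w(x)$ against the threshold $2k+3$ for each $x\in X$. The key point is that the weights of vertices of $X$ are integers, and $u,v$ lie in the same open unit interval. So the sum $w(u)+w(x)$ lies strictly between two consecutive integers, and so does $w(v)+w(x)$, and in fact the same two integers. Hence both sums fall on the same side of the integer threshold $2k+3$.

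\emph{Step 1: the weights of $X$ are integers.} For $x\in X$, condition (ii) gives $g(w,x)\in[k]$. By Definition~\ref{config}, $w(x)=g(w,x)$ if $c(w,x)=0$ and $w(x)=2k+3-g(w,x)$ if $c(w,x)=1$. In either case $w(x)$ is an integer.

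\emph{Step 2: $w(u)$ and $w(v)$ lie in a common open unit interval.} Since $u,v\in V-X$, condition (iii) gives a common $j\in[k+1]$ with $j-1<g(w,u),g(w,v)<j$; here $j=\lceil g(w,u)\rceil=\lceil g(w,v)\rceil$ because they share a configuration. They also share the value of $c$.
\begin{itemize}
\item If $c=0$, then $w(u),w(v)\in(j-1,j)$.
\item If $c=1$, then $w(u)=2k+3-g(w,u)$ and similarly for $v$, so $w(u),w(v)\in(2k+3-j,\,2k+4-j)$.
\end{itemize}
In both cases there is an integer $m$ with $m<w(u),w(v)<m+1$.

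\emph{Step 3: compare with the threshold.} Fix $x\in X$ and write $N=m+w(x)$, which is an integer by Step~1. Both $w(u)+w(x)$ and $w(v)+w(x)$ lie in the open interval $(N,N+1)$. Because $2k+3$ is an integer, it cannot lie strictly inside this interval. Therefore either both sums are at least $2k+3$ or both are below it.

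So $ux$ is an edge of $\Gamma(2k+3,w)$ if and only if $vx$ is. The case $x\in\{u,v\}$ cannot occur, since $x\in X$ while $u,v\notin X$. This proves that $u$ and $v$ have the same neighbours and non-neighbours in $X$.

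The main subtlety is ensuring that strictness is preserved when $c=1$: the map $g\mapsto 2k+3-g$ sends the open interval $(j-1,j)$ onto an open interval with integer endpoints, so the strict inequalities survive. Beyond that the argument is direct.
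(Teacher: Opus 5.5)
Your proof is correct, but it takes a different route from the paper.

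The paper fixes $a\in X$ with $g(w,a)=j$ and runs a case analysis: whether $au$ is an edge, crossed with which of $u$ and $a$ has weight at least $\frac{1}{2}(2k+3)$. In each case it turns the (non-)adjacency into a strict comparison between the non-integer value $g(w,u)$ and the integer $j$. It then transfers that comparison to $v$ using $\lceil g(w,u)\rceil=\lceil g(w,v)\rceil$ and $c(w,u)=c(w,v)$. This is the left-endpoint comparison from the interval picture in Figure~\ref{fig:1}. Along the way it uses that two vertices with $c=0$ are never adjacent and two with $c=1$ always are.

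You instead argue purely arithmetically. The weights on $X$ are integers, and the shared configuration places $w(u)$ and $w(v)$ in a common open interval $(m,m+1)$ with $m\in\mathbb{Z}$. So for each $x\in X$ both sums lie in one open interval between consecutive integers, which the integer threshold $2k+3$ cannot split.

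What your route buys:
\begin{itemize}
\item It removes all the case analysis and needs no structural facts about threshold graphs.
\item It makes explicit a slightly stronger statement that the paper uses only informally in the proof of Lemma~\ref{struct1}, where it appeals to ``arguments similar to Lemma~\ref{gval}'' after modifying $w_i'$. The statement is that adjacency to $X$ depends only on a vertex's configuration, even when comparing two different weight functions that agree on $X$. Your argument applies there verbatim.
\end{itemize}
The paper's version, in exchange, stays in the $g$/$c$ language used in the later claims, which keeps the geometric intuition visible but is not logically stronger.
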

     \begin{proof}
         Let $a\in X$ and let $g(w,a)=j$. 
         
         Suppose $au$ is an edge in $\Gamma(2k+3,w)$. We will show that $av$ is an edge in $\Gamma(2k+3,w)$. If $w(u)<\frac{1}{2}(2k+3)$, then we must have $w(a)\geq\frac{1}{2}(2k+3)$ and consequently $w(u)\geq 2k+3-w(a)=g(w,a)=j$. By our choice of the $TH$-representation, we know that we must have $w(u)>j$. Since $u$ and $v$ have the same configuration in $\Gamma(2k+3,w)$, we must have $w(v)<\frac{1}{2}(2k+3)$ and $w(v)>j=2k+3-w(a)$. Therefore, $av$ is an edge in $\Gamma(2k+3,w)$. On the other hand, if $w(u)\geq \frac{1}{2}(2k+3)$ then we have $w(v)\geq\frac{1}{2}(2k+3)$. If we also have $w(a)\geq \frac{1}{2}(2k+3)$ we must have $av$ as an edge in $\Gamma(2k+3,w)$ and we are done. So assume $w(a)<\frac{1}{2}(2k+3)$. Then $g(w,u)=2k+3-w(u)\leq w(a)=g(w,a)=j$. As earlier, we must have strict inequality $g(w,u)=2k+3-w(u)<j$. Since $u$ and $v$ have the same configuration in $\Gamma(2k+3,w)$, we must have $2k+3-w(v)\leq \lceil2k+3-w(v)\rceil =\lceil g(w,v)\rceil=\lceil g(w,u)\rceil\leq j=w(a)$ and so $av$ is an edge in $\Gamma(2k+3,w)$.
         
         Suppose $au$ is not an edge in $\Gamma(2k+3,w)$. We will show that $av$ is also not an edge in $\Gamma(2k+3,w)$. If $w(u)\geq \frac{1}{2}(2k+3)$, we must have $w(a)<\frac{1}{2}(2k+3)$ and consequently $j=g(w,a)=w(a)< 2k+3-w(u)=g(w,u)$. By our choice of the $TH$-representation, we know that we must have $j<2k+3-w(u)$. Since $u$ and $v$ have the same configuration in $\Gamma(2k+3,w)$, we must have $w(v)\geq \frac{1}{2}(2k+3)$ and $w(a)=j<2k+3-w(v)$. Therefore, $av$ is not an edge in $\Gamma(2k+3,w)$. If $w(u)< \frac{1}{2}(2k+3)$ then we have $w(v)< \frac{1}{2}(2k+3)$. If we also have $w(a)< \frac{1}{2}(2k+3)$ then we cannot have $av$ as an edge in $\Gamma(2k+3,w)$ and we are done. So assume $w(a)\geq  \frac{1}{2}(2k+3)$. Then $j=g(w,a)=2k+3-w(a)> w(u)=g(w,u)$. As earlier, we must have the inequality $w(u)<j$. Since $u$ and $v$ have the same configuration in $\Gamma(2k+3,w)$, we must have $w(v)< j=2k+3-w(a)$ and so $av$ is not an edge in $\Gamma(2k+3,w)$.
     \end{proof}
\begin{lemma}\label{struct1}
    Suppose $G$ is a graph and let $\dimt(G)=t$. Suppose $X$ is a cluster vertex deletion set of $G$ of cardinality $k\geq 1$ and is given as input. Let $l$ and $p$ be positive integers such that the largest and second largest cliques in $G-X$ have order $l$ and $p$ respectively. Then we can obtain an induced subgraph $G'$ of $G$ in polynomial time such that $G'$ has order at most $2p\cdot(2^k+p)^p(2^{tp}(k+1)^{tp}+1)+l+k$ and $\dimt(G')=t$. Moreover, we can obtain a $t$ sized $TH$-representation of $G$ from a $t$ sized $TH$-representation of $G'$ in polynomial time.   
\end{lemma}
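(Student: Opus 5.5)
The plan is to build the kernel $G'$ by keeping $X$, keeping one largest clique of $G-X$ (order $l$), and then shrinking every equivalence class $\langle C\rangle$ of the remaining cliques (each of order at most $p$) to a bounded number of representatives. This is the idea abstracted from \cite{bruhn2016structural}.

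\textbf{Step 1: count the classes.} Every vertex of $G-X$ has one of $2^k$ possible neighbourhoods in $X$. A clique of order at most $p$ is therefore determined, up to $\sim$, by a multiset of at most $p$ such types. This gives at most $(2^k+p)^p$ equivalence classes among the non-largest cliques.

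\textbf{Step 2: count clique patterns.} Set $N=2^{tp}(k+1)^{tp}$. Fix a $t$-sized $TH$-representation in which, by Corollary~\ref{integth1}, every graph has the form $\Gamma(2k+3,w_i)$ with conditions (i)--(iii). Each non-$X$ vertex then has, in each of the $t$ graphs, a configuration $(c(w_i,\cdot),\lceil g(w_i,\cdot)\rceil)$ taking at most $2(k+1)$ values. Hence a clique of order at most $p$, read through the bijection $f$ defining $\sim$, has at most $N$ possible \emph{patterns}, namely the tuple of configurations of its vertices in all $t$ graphs.

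\textbf{Step 3: build $G'$.} In each class keep $N+1$ cliques, or all of them if there are fewer. Each kept clique has order at most $p$; I expect the extra factor $2p$ in the bound to absorb slack in this count. $G'$ is an induced subgraph of $G$, so $\dimt(G')\le \dimt(G)=t$.

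\textbf{Step 4: extend a representation of $G'$ back to $G$.} Take a $t$-sized representation of $G'$. In any class from which cliques were deleted, $N+1$ cliques were kept, so by pigeonhole two kept cliques $C_1,C_2$ in that class have the same pattern. Each deleted clique $D$ in the class gets, vertex by vertex via the bijection to $C_1$, the same configuration as $C_1$ in every graph. The exact $g$-values are chosen inside the corresponding open unit intervals, which is possible by condition (iii) of Corollary~\ref{integth1}.

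\textbf{Step 5: check the representation.} By Lemma~\ref{gval}, adjacencies between $D$ and $X$ are the same as those of $C_1$, and hence correct. For two non-$X$ vertices, adjacency in $\Gamma(2k+3,w)$ depends only on the $c$-values and the relative order of the $g$-values. Two vertices with $c=0$ are never adjacent, and two with $c=1$ are always adjacent. A vertex with $c=0$ is adjacent to one with $c=1$ exactly when its $g$-value is at least the other's.

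\textbf{Main obstacle: non-edges from $D$ into $G-X$.} The delicate case is non-edges between $D$ and another clique $Q$ of $G-X$. If $Q$ is neither $C_1$ nor another copy of $C_1$, the witnessing graph for the corresponding $C_1$–$Q$ non-edge should carry over, once the new $g$-values are placed adjacent to those of $C_1$ in the order. If $Q$ is $C_1$, or another deleted copy of $C_1$, I would use $C_2$ instead. The non-edge between $C_1$ and $C_2$ has a witness graph, and since $C_1$ and $C_2$ share a pattern, the within-cell ordering of $C_2$ relative to $C_1$ can be reproduced. Concretely, I would place the successive copies $D_1,D_2,\dots$ in each cell in a chain ordered exactly like $C_1$ versus $C_2$, for example by interpolating $g$-values between those of $C_1$ and $C_2$ in the same cell.

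Making this ordering argument rigorous while simultaneously preserving all edges inside each copy is the step I expect to be hardest. It will need Observation~\ref{atmonevc}: at most one vertex of a clique has $c=0$ in each graph, which makes intra-clique edges robust to these perturbations. All constructions in the plan are explicit, so the extension runs in polynomial time.
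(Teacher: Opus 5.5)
\textbf{Verdict: genuine gap.} Your counting in Steps 1--3 and the overall strategy agree with the paper: pigeonhole two kept cliques $C_1,C_2$ with the same pattern, give a deleted clique the configuration of $C_1$, use Lemma~\ref{gval} for adjacencies to $X$, and use $C_1$--$C_2$ non-edges as witnesses for non-edges into $C_1$. The gap is that you never say how the actual $g$-values of a deleted clique $D$ are chosen, and the rules you sketch conflict.

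Here is a concrete failure. Suppose that in some graph $i$ the $c=0$ vertex $u$ of $C_1$ and a $c=1$ vertex $x$ of $C_1$ lie in the same cell with $g(w_i,x)<g(w_i,u)<g(w_i,f(x))$. Suppose also that graph $i$ is the witness of $u\,f(x)$ that you want to reuse for the non-edge $u\,d_x$.
\begin{itemize}
\item Then $g(w_i,d_x)$ must exceed $g(w_i,u)$, so $d_x$ cannot be placed next to $x$, as your rule for cliques $Q\neq C_1$ prescribes.
\item If instead you interpolate $d_x$ towards $f(x)$, the copy $d_u$ is dragged away from $u$: it needs $g(w_i,d_u)\ge g(w_i,d_x)$ to keep the edge $d_ud_x$. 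Since $c(w_i,d_u)=0$, it may then become adjacent to $c=1$ vertices of other cliques whose only witness against $u$ is graph $i$.
\end{itemize}
Observation~\ref{atmonevc} does not resolve this. You flagged this step as the hardest, but it is the heart of the lemma; without it the proposal is only a plan.

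\textbf{The missing idea: the paper's normalization (Claims~\ref{w_i} and~\ref{conf}).} Before copying anything, in every clique of $G'-X$ raise the $g$-value of each $c=1$ vertex to the largest $g$-value in its cell within its own clique. This is safe for three reasons.
\begin{itemize}
\item It only lowers weights, so no non-edge is lost.
\item Adjacencies to $X$ survive because the cell is unchanged.
\item Intra-clique edges survive because the unique $c=0$ vertex of a clique has the largest $g$-value in it.
\end{itemize}
Afterwards, vertices of one clique that share a cell have equal $g$-values; in the situation above, $g(w_i,x)=g(w_i,u)$.

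\textbf{The copying rule.} Now set $c(w_i,v)=c(w_i,q(v))$ and $g(w_i,v)=g(w_i,q(v))\pm\delta_i$, shifted towards $g(w_i,f(q(v)))$. Take $\delta_i$ to be half the minimum of all positive slacks $2k+3-w_i(a)-w_i(b)$ and all gaps between distinct $g$-values.
\begin{itemize}
\item Non-edges to cliques other than $C_1$ survive by the choice of $\delta_i$.
\item For $u\in C_1$, take the witness $i$ of $u\,f(q(v))$. In the mixed cases $u$ and $q(v)$ are forced into the same cell, hence $g(w_i,u)=g(w_i,q(v))$. The shift towards $C_2$ then lands strictly on the correct side of $g(w_i,u)$.
\item The same equalities give the $c=0$ vertex of the new clique the same $g$-value as its cellmates, so its internal edges are kept.
\end{itemize}

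\textbf{Avoiding the chain of copies.} You do not need to reinsert all deleted cliques at once. Show instead that a class with at least $2^{tp}(k+1)^{tp}+2$ cliques may lose one clique, and iterate, re-normalizing via Corollary~\ref{integth1} each time. Then non-edges between two deleted copies never arise.
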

\begin{proof}
     We will show that if an equivalence class of cliques $\mathcal{G}\subseteq  \mathcal{C}$ contains at least $2^{tp}(k+1)^{tp}+2$ many cliques, then $\forall C\in \mathcal{G}$, we have $\dimt(G)=\dimt(G-C)$. Due to this fact, we can retain at most $2^{tp}(k+1)^{tp}+1$ many cliques in each equivalence class and remove all extra cliques. Because each clique other than the largest one is of order at most $p$, the number of equivalence classes is at most $\sum_{r=1}^{p}\binom{2^k+r-1}{r}
 =\binom{2^k+p}{p}-1\leq (2^k+p)^p$. It follows that if $l=p$, we have a kernel of order $2p\cdot(2^k+p)^p (2^{tp}(k+1)^{tp}+1) +k$. On the other hand, if $p<l$, then the equivalence class containing the largest clique is a singleton class and therefore, we will end up with a kernel of order  $2p\cdot(2^k+p)^p (2^{tp}(k+1)^{tp}+1)+l+k$ as claimed.

     To that end, let $\tilde{C}\in \mathcal{G}$ and $|\mathcal{G}|\geq2^{tp}(k+1)^{tp}+2$. Consider the graph $G'=G-\tilde{C}$. It is enough to show that $\dimt(G')=\dimt(G)$ and that we can obtain an optimal $TH$-representation of $G$ from an optimal $TH$-representation of $G'$ in polynomial time.   
     
     Let $t'=\dimt(G')$. By Corollary~\ref{integth1}, we can assume that we have a $TH$-representation $\left\{\Gamma(2k+3,w_i')\right\}_{i\in [t']}$ of $G'$ such that $\forall i\in [t'],$ $(i)$ $g(w_i',\cdot)$ takes distinct values, $(ii)$ $\forall v\in X$, $g(w_i',v)\in [k]$ and (iii) $\forall v\in V-X$, $\exists j\in [k+1]$ such that $j-1< g(w'_i,v)< j$.  

     $\forall i\in [t'],$ we define a new function $w_i$ using the definition of $w_i'$ as follows:
     \begin{romanenumerate}
           
         \item Suppose $C\in \mathcal{C}\setminus\{\tilde{C}\}$ and suppose $v$ is a vertex of $C$. 
         
         Assign $g(w_i,v):=\max\{g(w_i',u)\colon u \t{ is in }C \t{ and }\lceil g(w_i',u)\rceil= \lceil g(w_i',v)\rceil\}$. 
         
         Let $w_i(v)=\begin{cases}
g(w_i,v), & \text{if } c(w_i',v)=0,\\
2k+3-g(w_i,v), & \text{if } c(w_i',v)=1.
\end{cases}$.
         \item For any other vertex $v$ of $G'$, we assign $w_i(v)=w_i'(v)$.
     \end{romanenumerate}
     \begin{claim}\label{w_i}
     $\{\Gamma(2k+3,w_i)\}_{i\in [t']}$ is a $TH$-representation of $G'$.
\end{claim}

     \begin{claimproof}
         Suppose $\forall i\in [t'], w_i$ satisfies the above requirements. Fix an $i\in [t']$. Suppose for some vertex $v$ in some clique $C$, $w_i(v)\neq w_i'(v)$. By construction, we must have $c(w_i',v)=1$. It is clear that $c(w_i,v)=c(w_i',v)=1$ and $g(w_i,v)=g(w_i,u)=g(w_i',u)$ and therefore, $\lceil g(w_i,v)\rceil=\lceil g(w_i',v)\rceil$. Hence, by arguments similar to Lemma~\ref{gval}, the adjacencies and non-adjacencies of $v$ remain the same with respect to the vertices in $X$. The vertices in $C$ remain adjacent to each other. Observe that for every vertex $v$ in $G'$, $w_i(v)\leq w_i'(v)$. Therefore, no new edges are formed. 
     \end{claimproof}
     \begin{claim}\label{conf}
         If $C\in \mathcal{C}$, then for every pair of vertices $u,v$ in $C$ that satisfies $\lceil g(w_i,v)\rceil=\lceil g(w_i,u)\rceil$, we have $g(w_i,v)= g(w_i,u)$. In particular, if there is a vertex $u$ in $C$ such that $c(w_i,u)=0$, then for every other vertex $v\neq u$ in $C$, we have $g(w_i,v)\leq g(w_i,u)$.
     \end{claim}
     \begin{claimproof}
         Follows directly from construction.
     \end{claimproof}
     We will say that two distinct cliques $C_1,C_2\in \langle \tilde{C}\rangle$ have the \emph{same configuration in $\{ \Gamma(2k+3,w_i)\}_{i\in [t']}$} if there is a bijection $f$ that maps the vertices of $C_1$ to the vertices of $C_2$ such that for every vertex $ v$ in $ C_1$, $\forall i\in [t']$, $v$ and $f(v)$ have the same configuration in $\Gamma(2k+3,w_i)$. 
     
     It is easy to see that the number of cliques that do not have the same configuration in $\{ \Gamma(2k+3,w_i)\}_{i\in [t']}$ is at most $2^{t'p}\times (k+1)^{t'p}$ (for every vertex $v$ and for each $i\in [t']$ there are 2 choices for the value of $c(w_i,v)$ and $k+1$ many choices for the value of $\lceil g(w_i,v)\rceil$).  
     Since we have more cliques than that, there are two cliques $C_1,C_2\in \langle \tilde{C}\rangle $ that have the same configuration in $\{ \Gamma(2k+3,w_i)\}_{i\in [t']}$. 

     Suppose $q$ is the bijective function that maps the vertices of $\tilde{C}$ to the vertices of $C_1$ such that for every vertex $ v$ in $ \tilde{C},$ we have $N(v)\cap X=N(q(v))\cap X$ (such a function exists because $\tilde{C}\sim C_1$). For every $i\in [t']$, we will extend the definition of $w_i$ to include the vertices of $\tilde{C}$ by extending the definitions of $g(w_i,\cdot)$ and $c(w_i,\cdot)$ respectively. For every vertex $v$ in $\tilde{C}$, assign $c(w_i,v)=c(w_i,q(v))$.  
     
     Let $\delta_i'=\min_{w_i(u)+w_i(v)<2k+3}\{2k+3-w_i(u)-w_i(v)\}$ and $\delta_i''= \min_{g(w_i,u)\neq g(w_i,v)}\{|g(w_i,u)-g(w_i,v)|\}$. Let $\delta_i=\frac{1}{2}\min\{\delta_i',\delta_i''\}$. 
    
     Note that for a vertex $ v$ in $ \tilde{C}$, $g(w_i,q(v))\neq g(w_i,f(q(v)))$. For a vertex $ v$ in $ \tilde{C}$, assign $g(w_i,v)$ as \begin{equation}\label{eq}
         g(w_i,v)=\begin{cases}
         g(w_i,q(v))+\delta_i & \t{ if }g(w_i,q(v))<g(w_i,f(q(v)))\\
          g(w_i,q(v))-\delta_i & \t{ if }g(w_i,q(v))>g(w_i,f(q(v)))
          \end{cases}.
     \end{equation}
     Since $\forall i\in [t']$, $g(w_i,\cdot)$ and $c(w_i,\cdot)$ values are known on the vertex set of $G$, the values of $w_i$ is completely determined.
     Note that due to our choice of $\delta_i,$ for every vertex $v$ in $\tilde{C}$, we have $\lceil g(w_i,v)\rceil=\lceil g(w_i,q(v))\rceil=\lceil g(w_i,f(q(v)))\rceil$. It is also clear that we have $w_i(v)=w_i(q(v))\pm \delta_i$. Therefore, $\tilde{C}$, $C_1$ and $C_2$ are in the same configuration in $\{ \Gamma(2k+3,w_i)\}_{i\in [t']}$.

     \begin{claim}\label{claimt}
         $\{\Gamma(2k+3,w_i)\}_{i\in [t']}$ is a $TH$-representation for $G$.
     \end{claim}
     \begin{claimproof}
         The only thing that needs to be checked are the adjacencies and non-adjacencies of the vertices in $\tilde{C}$. As discussed above, $\forall i\in [t']$, we know that $\tilde{C}$, $C_1$ and $C_2$ are in the same configuration in $\{ \Gamma(2k+3,w_i)\}$.  
         Therefore, by Lemma~\ref{gval}, the adjacencies and non-adjacencies of $\tilde{C}$ with respect to the vertices of $X$ are resolved. It is also easy to observe that the vertices of the clique $\tilde{C}$ are adjacent to each other in each graph $\Gamma(2k+3,w_i)$. Therefore, we focus on the adjacencies and non-adjacencies of vertices in $\tilde{C}$ with the vertices of all the other cliques in $\mathcal{C}$.

         Let $v$ be a vertex of $\tilde{C}$. We consider two cases.
         \begin{enumerate}
             \item Suppose that $u$ is a vertex from $G-X$ not in $\tilde{C}$ or $C_1$. We know that $\exists i\in [t']$ such that $w_i(u)+w_i(q(v))<2k+3$. Clearly, it is true that $w_i(u)+w_i(q(v))-\delta_i <2k+3$. By our choice of $\delta_i$, we also have $w_i(u)+w_i(q(v))+\delta_i<w_i(u)+w_i(q(v))+ 2k+3-w_i(u)-w_i(q(v)) =2k+3$. Since $w_i(v)=w_i(q(v))\pm \delta_i$, we have $w_i(u)+w_i(v)<2k+3$ and we are done. 
             \item
             Suppose that $u$ is a vertex from $C_1$. We know that $\exists i\in [t']$ such that $w_i(u)+w_i(f(q(v)))<2k+3$.
         \begin{romanenumerate}
             \item If $c(w_i,u)=c(w_i,f(q(v)))=0$, then $c(w_i,v)=0$ because $v,q(v)$ and $f(q(v))$ have the same configuration in $\Gamma(2k+3,w_i)$. Consequently, $w_i(u)+w_i(v)<\frac{1}{2}(2k+3)+\frac{1}{2}(2k+3)=2k+3$ and therefore, $uv$ is not an edge in $\Gamma(2k+3,w_i)$. 
             \item Now assume $c(w_i,u)=0$ and $c(w_i,f(q(v)))=1$. This means that \begin{equation}\label{eq2}
                 c(w_i,f(u))=0 \t{ and }c(w_i,v)=1.
             \end{equation} because $u$ and $f(u)$ have the same configuration in $\Gamma(2k+3,w_i)$ and 
             $v,q(v)$ and $f(q(v))$ have the same configuration in $\Gamma(2k+3,w_i)$. Hence, by Claim~\ref{conf} and expression~(\ref{eq2}), we have $g(w_i,f(q(v)))\leq g(w_i,f(u))$. 
             
            Since $c(w_i,u)=0$ and $c(w_i,f(q(v)))=1$, we have \begin{equation}\label{eq3}
                 g(w_i,u)=w_i(u)<2k+3-w_i(f(q(v)))=g(w_i,f(q(v)))\leq  g(w_i,f(u)).
             \end{equation} 
             Since  $u$ and $f(u)$ have the same configuration in $\Gamma(2k+3,w_i)$, it follows from expression~(\ref{eq3}) that 
             \begin{equation}\label{eq4}
                 \lceil g(w_i,u)\rceil =\lceil g(w_i,f(u))\rceil =\lceil g(w_i,f(q(v)))\rceil=\lceil g(w_i,v)\rceil=\lceil g(w_i,q(v))\rceil.
             \end{equation}
         By Claim~\ref{conf} and expression~(\ref{eq4}), we have \begin{equation}\label{eq5}
             g(w_i,q(v))=g(w_i,u)<g(w_i,f(q(v))).
         \end{equation} Therefore, by Equation~(\ref{eq}), $g(w_i,v)=g(w_i,q(v))+\delta_i $ and consequently, we have \begin{align*}
             w_i(v)&=2k+3-g(w_i,v) & \t{ (by definition of }g(w_i,\cdot) \t{ and expression (2)})\\&=2k+3-g(w_i,q(v))-\delta_i& \t{ (by Equation (1))}\\&=2k+3-g(w_i,u)-\delta_i& \t{ (by Equation (5))}\\&=2k+3-w_i(u)-\delta_i& \t{ (by definition of }g(w_i,\cdot)\t{ and assumption}).
         \end{align*}Therefore, $w_i(u)+w_i(v)=w_i(u)+2k+3-w_i(u)-\delta_i=2k+3-\delta_i<2k+3$.
         \item Finally assume that $c(w_i,u)=1$ and $c(w_i,f(q(v)))=0$. The proof is similar to the proof of case 2 part (ii) and therefore, has been moved to the appendix. 
         \end{romanenumerate}
         \end{enumerate}
     \end{claimproof}
     Therefore, we showed that $t=\dimt(G)\leq t'\leq t$. Hence, $\dimt(G)=\dimt(G-\tilde{C})$. 
     \begin{claim}\label{poly}
         From a $TH$-representation of $G'$ we can obtain a $TH$-representation of $G$ in polynomial time.
     \end{claim}
     The proof of Claim~\ref{poly} has been moved to the appendix.
\end{proof}
\begin{lemma}\label{cliqmax}
    Suppose $G=(V,E)$ is a graph and suppose $X$ is a cluster vertex deletion set of $G$. Let $|X|=k$ and $\dimt(G)= t$. Then there is a subgraph $G'=(V',E')$ of $G$ such that $\dimt(G')=\dimt(G)$ and the largest clique in $G'-X$ has order at most $2^k(t+1)$. Furthermore, it is possible to obtain a $TH$-representation of $G$ from a $TH$-representation of $G'$ in polynomial time. 
\end{lemma}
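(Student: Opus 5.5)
The plan is to shrink each clique of $G-X$ by deleting ``twin'' vertices. Two vertices of the same clique that have the same neighbourhood in $X$ are true twins in $G$. Observation~\ref{atmonevc} then shows that, once such a twin class is large, one of its vertices can be deleted and later re-inserted as an exact copy of a surviving twin.

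Fix a clique $C\in\mathcal{C}$ and partition $V(C)$ into \emph{types} according to $N(v)\cap X$. There are at most $2^k$ types. Suppose some type class $T\subseteq V(C)$ has at least $t+2$ vertices. Pick $x\in T$ and let $G'=G-x$. Since the class of graphs with $\dimt\leq t$ is hereditary, $t':=\dimt(G')\leq t$. Take a $TH$-representation $\{\Gamma(s_i,w_i)\}_{i\in[t']}$ of $G'$, with weights capped at $s_i$ as in Observation~\ref{obsthresholdvaluesmall}. In each $\Gamma(s_i,w_i)$ the set $V(C)\setminus\{x\}$ is still a clique, because the representation contains $G'$ as a subgraph. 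So by Observation~\ref{atmonevc}, at most one vertex of $V(C)\setminus\{x\}$ has $w_i$-weight $<s_i/2$. Over all $i\in[t']$, at most $t'\leq t$ vertices of $T\setminus\{x\}$ are ever ``low''. Since $|T\setminus\{x\}|\geq t+1$, some $u\in T\setminus\{x\}$ satisfies $w_i(u)\geq s_i/2$ for every $i$.

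Extend each $w_i$ by setting $w_i(x):=w_i(u)$. I will check the result is a $TH$-representation of $G$ of size $t'$:
\begin{itemize}
\item The edge $xu$ lies in every $\Gamma(s_i,w_i)$, since $w_i(x)+w_i(u)\geq s_i$.
\item For $y\neq x,u$, the pair $xy$ has exactly the same status in each $\Gamma(s_i,w_i)$ as $uy$. In $G$, $x$ and $u$ are true twins: they have the same neighbours in $X$, both are adjacent to all of $C$, and both are nonadjacent to every other clique of $G-X$. Hence $xy\in E$ iff $uy\in E(G')$ iff $uy$ is in every $\Gamma(s_i,w_i)$ iff $xy$ is in every $\Gamma(s_i,w_i)$.
\end{itemize}
Thus $\dimt(G)\leq t'$, so $\dimt(G')=\dimt(G)$. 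The reconstruction only needs to find $u$ (scan the $t'$ graphs) and copy weights, so it takes polynomial time.

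Iterate this deletion while some type class in some clique of $G-X$ has at least $t+2$ vertices. The dimension stays equal to $t$ at every step, so the same threshold $t+1$ remains valid throughout. The process ends with every type class of size at most $t+1$, hence every clique of $G'-X$ has order at most $2^k(t+1)$. The representation of $G$ is recovered by undoing the deletions in reverse order, each in polynomial time.

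The only subtle point, and the step I expect to need the most care, is choosing the surviving twin $u$ so that it is high in \emph{all} $t'$ graphs simultaneously. This is what makes $xu$ an edge in every graph after copying weights. The counting above via Observation~\ref{atmonevc} handles it, and is exactly why the bound has the factor $t+1$.
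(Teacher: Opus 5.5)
Your proof is correct and follows essentially the same route as the paper: group the vertices of a clique of $G-X$ by their neighbourhood in $X$, use Observation~\ref{atmonevc} and pigeonhole to find a twin that has high weight in all $t'\le t$ threshold graphs, and copy its weights onto the deleted twins. The only difference is that the paper reduces just $C_{max}$, retaining $t+1$ vertices per class in one step, and appeals to Theorem~\ref{dimtgx} to argue that every other clique already has order at most $t$; you instead reduce every clique one vertex at a time, which avoids that appeal.
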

\begin{proof}
    Let $\left\{\Gamma(2k+3,w_i)\right\}_{i\in [t]}$ be a $TH$-representation of $G$. Suppose $C_{max}$ is the largest clique in $G-X$.
     
     If the number of vertices $C_{max}$ is less than or equal to $ 2^k(t+1)$, then $G=G'$ and we are done. Therefore, suppose $C_{max}$ has atleast $2^k(t+1)+1$. Define an equivalence relation $\sim_{mc}$ on the vertices of $C_{max}$ in the following way: for distinct vertices $u$ and $v$ in $C_{max}$, $u\sim_{mc} v$ if and only if $N_{G}(u)\cap X=N_{G}(v)\cap X$. Let $\mathcal{F}_{mc}$ be the family of all equivalence classes under $\sim_{mc}$ that contain at least $t+1$ vertices.
     \begin{claim}
         $\mathcal{F}_{mc}$ is non-empty.
     \end{claim}
     \begin{claimproof}
         Since there are at most $2^k$ distinct subsets of $X$, there can be at most $2^k$ distinct equivalence classes under $\sim_{mc}$. Because $C_{max}$ contains at least $2^k(t+1)+1$ vertices, by the pigeonhole principle, there is at least one equivalence class that contains at least $t+1$ vertices. 
     \end{claimproof}
       
     Let $G'$ be the graph obtained from $G$ by retaining exactly $t+1$ vertices from $F$ for every $F\in \mathcal{F}_{mc}$. 
     \begin{claim}\label{claimth}
         $\dimt(G')=\dimt(G)$.
     \end{claim}
     \begin{claimproof}
         Fix an equivalence class $F\in \mathcal{F}_{mc}$. Let $G_F=(V_F,E_F)$ obtained from $G$ by retaining exactly $t+1$ vertices from $F$. It is sufficient to prove that $\dimt(G_F)=\dimt(G)$ because we can do this step recursively for every equivalence class in $\mathcal{F}_{mc}$ and the claim follows. 
         
         Suppose $t'=\dimt(G_F)$ and $\left\{\Gamma(2k+3,w_i')\right\}_{i\in [t']}$ is an optimal $TH$-representation for $G_F$. 
          Note that by Observation~\ref{atmonevc}, for a fixed $i\in [t']$ there can be at most one vertex $v\in F\cap V_F$ such that $w_i'(v)<\frac{1}{2}(2k+3)$. Since $|F\cap V_F|= t+1> t\geq t'$, there is a vertex $u_F$ in $F\cap V_F$ such that $\forall i\in [t']$, we have $w_i'(u_F)\geq \frac{1}{2}(2k+3)$.
          
           $\forall i\in [t'],$ define $w_i^*$ on the vertices of $G$ as follows: for $u\in V\setminus F$, assign $w_i^*(u)=w_i'(u)$ and for $u\in F$, assign $w_i^*(u)=w_i'(u_F)$.  It is easy to see that $\{\Gamma(2k+3,w_i^*)\}_{i\in [t']}$ is a $TH$-representation for $G$. Therefore, $t= t'$.
     \end{claimproof}
     Note that the vertices of $C_{max}$ that are retained in $G'$ induces a clique of order at least $|\mathcal{F}_{mc}|(t+1)\geq (t+1)>t$ and therefore, by Theorem~\ref{dimtgx}, it remains the largest clique in $G'-X$. Since there are at most $2^k$ many equivalence classes under $\sim_{mc}$, the largest clique in $G'$ has at most $2^k(t+1)$ vertices. Therefore, $G'$ is the required graph. 
     \end{proof}
\begin{theorem}\label{EPAS}
    There is an efficient parameterized approximation scheme to find the $TH$-representation of $G$ when parameterized with the cluster vertex deletion number.
\end{theorem}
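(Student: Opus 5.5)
The plan is to trade an additive error of $k$ for a multiplicative error of $1+\varepsilon$, and to switch to an exact FPT computation when the optimum is small. First compute a cluster vertex deletion set $X$ with $|X|=k=cvd(G)$ using a known FPT algorithm for cluster vertex deletion (single-exponential in $k$). Let $\mathcal{C}$ be the cliques of $G-X$, and let $l\geq p$ be the orders of the largest and second largest of them.

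\emph{Lower bound.} Take two cliques $C_1,C_2\in\mathcal{C}$ of orders $l$ and $p$. They are distinct components of $G-X$, so there are no edges between them. Hence the complement of $G[V(C_1)\cup V(C_2)]$ is $K(p,l)$. By Theorem~\ref{dimtgx} its threshold cover number is $p$, and since $\dimt$ is hereditary, $t:=\dimt(G)\geq p$.

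\emph{Upper bound $p+k$, computable in polynomial time.} $G-X$ is a cluster graph, so its complement is a complete multipartite graph whose second largest part has order $p$. By Theorem~\ref{dimtgx} and its constructive proof (or directly, via the duality between covers and intersections), we get a $TH$-representation of $G-X$ of size $p$. Extend each of these threshold graphs to $V$ by giving every $x\in X$ weight equal to the threshold, so $x$ is universal and the graph stays threshold. Then, for each $x\in X$, add one extra graph: the complement of the star joining $x$ to its non-neighbours in $G$. A star is threshold, so this graph is threshold, and it contains $G$. The intersection of all these graphs is exactly $G$, giving a representation of size $p+k\leq t+k$.

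\emph{The scheme.} Given $\varepsilon>0$, split into two cases.
\begin{itemize}
\item If $p\geq k/\varepsilon$, output the representation of size $p+k$. Its size is at most $t+k\leq t+\varepsilon p\leq (1+\varepsilon)t$.
\item Otherwise $t\leq p+k<k/\varepsilon+k$, and we solve exactly. For each guess $t^*=1,2,\dots,\lfloor k/\varepsilon+k\rfloor$, we try to decide whether $\dimt(G)\leq t^*$:
\begin{itemize}
\item Apply the reduction of Lemma~\ref{cliqmax} with $t^*$ in place of $t$. It keeps $t^*+1$ vertices from each $\sim_{mc}$ class of $C_{max}$, so it needs only the guessed value. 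This bounds $l$ by $2^k(t^*+1)$.
\item Apply Lemma~\ref{struct1}. This keeps at most $2^{t^*p}(k+1)^{t^*p}+1$ cliques per equivalence class, again depending only on $t^*$, $p$ and $k$. The result is an induced subgraph $G'$ whose order is bounded by a function of $k$ and $\varepsilon$ alone, since $p,t^*=O(k/\varepsilon)$.
\item Run the brute-force enumeration from Section~\ref{prelim} on $G'$, in time $2^{O(|V(G')|\,t^*\log|V(G')|)}$.
\end{itemize}
The smallest $t^*$ for which the reductions and the search succeed equals $\dimt(G)$. Lifting the representation back to $G$ uses the polynomial-time procedures of Lemma~\ref{cliqmax} and Claim~\ref{poly}. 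The total running time is $f(k,\varepsilon)\,n^{O(1)}$.
\end{itemize}

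The main obstacle is the correctness of running the reductions with a guessed $t^*$ instead of the true $t$. The proofs of Lemmas~\ref{cliqmax} and~\ref{struct1} use the value $t$ in two places: as the number of retained vertices or cliques, and in pigeonhole arguments over an optimal representation of the reduced graph. I would re-check that these arguments work for any $t^*\geq\dimt(G')$ of the reduced graph. Concretely, if the reduced instance has a representation of size at most $t^*$, the pigeonhole still finds $u_F$ (respectively two cliques $C_1,C_2$ with the same configuration), so the representation lifts to $G$. Conversely, $\dimt$ of an induced subgraph never exceeds $\dimt(G)$. Together these make the test ``reduced graph has $\dimt\leq t^*$'' equivalent to ``$\dimt(G)\leq t^*$'', which is exactly what the search over $t^*$ needs.
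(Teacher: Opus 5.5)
Your proof is correct and has the same architecture as the paper's. Both use the lower bound $\dimt(G)\ge p$ from two cliques of $G-X$ and Theorem~\ref{dimtgx}, a polynomial-time representation of size $p+O(k)$, and a case split on $p$ against $O(k/\varepsilon)$. In the small case both solve exactly by shrinking with Lemmas~\ref{cliqmax} and~\ref{struct1} and brute-forcing the kernel.

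Two points differ, and both favour your version.

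\textbf{The upper bound.} You take the $p$ graphs for $G-X$ with $X$ made universal, plus one co-star per $x\in X$. This gives size $p+k$ entirely in polynomial time. The paper instead writes $G=G_1\cap G_2\cap G_3$ and pays $\dimt(G_1)+\dimt(G_3)\le 2k$ for the non-edges incident to $X$. It handles $G_1$ by brute force and $G_3$ by K\H{o}nig's theorem, so it needs the case threshold $2k\le\epsilon p$ rather than your $k\le\varepsilon p$.

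\textbf{Not knowing $t$.} Lemmas~\ref{cliqmax} and~\ref{struct1} are stated for $t=\dimt(G)$, and the reductions they describe keep $t+1$ vertices per class and $2^{tp}(k+1)^{tp}+1$ cliques per class. The algorithm does not know $t$. The paper only asserts that the kernels exist, with sizes bounded by substituting $2k(1+1/\epsilon)$ for $t$, and never explains how to construct them without $t$. Your check supplies exactly this missing justification. The pigeonhole steps (finding $u_F$, or two cliques with the same configuration) only need a representation of the reduced graph of size at most $t^*$. Together with heredity, this makes ``the reduced graph has $\dimt\le t^*$'' equivalent to ``$\dimt(G)\le t^*$''.

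The same observation also lets you drop the loop over $t^*$. Run the reductions once with $T=\lfloor k/\varepsilon+k\rfloor\ge\dimt(G)$. Then $\dimt$ of the resulting kernel equals $\dimt(G)$, and it can be computed by brute force.
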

\begin{proof}
    Let $\epsilon>0$. Let $X$ be a cluster vertex deletion set of a graph $G=(V,E)$ with $|X|= k$. When $k=0$, we are done by Theorem~\ref{dimtgx}. So assume $k\geq 1$. Let $G_1=(V,E_1)$ be such that $E_1=E\cup \{uv\colon u\in V\setminus X, v\in V\setminus u\}$, $G_2=(V,E_2)$ such that $E_2=E\cup\{uv\colon u\in  X, v\in V\setminus u\}$, and $G_3=(V,E_3)$ be such that $E_3=\{uv\in E\colon u\in  X, v\in V\setminus X\}\cup \{uv\colon u,v\in X\t{ and }u\neq v\}\cup \{uv\colon u,v\in V\setminus X\t{ and }u\neq v\}$. Then clearly, $G=G_1\cap G_2\cap G_3$. 
    
    We have $\dimt(G_1)=\dimt(G- (V\setminus X))\leq k$ because it is the graph obtained from $G$ by making all vertices in $V\setminus X$ into universal vertices. A $TH$-representation for $G_1$ can be obtained in $2^{O(k^2\log k)}$ time (see the discussion in Section~\ref{prelim}).
    
    $G_3$ is a co-bipartite graph with one part of cardinality at most $k$. Therefore, the vertex cover of $\overline{G_3}$ can be at most $k$ and it follows from Theorem~\ref{thpoly} that $\dimt(G_3)\leq k$ and its $TH$-representation can be computed in polynomial time. 

     If $G-X$ contains at least 2 cliques, Theorem~\ref{dimtgx} tells us that the threshold cover number of $\overline{G_2}$ is equal to the order of the second largest clique of $G-X$ because $\overline{G_2}$ is a union of isolated vertices with a complete multipartite graph with the second largest part having cardinality $p$. In fact, $\dimt(G_2)=\dimt(G_2-X)$. In addition, a $TH$-representation for $G_2$ can be computed in polynomial time. Since $G$ contains $G-X$ as an induced subgraph, we must have $\dimt(G)\geq \dimt(G-X)=\dimt(G_2-X)=\dimt(G_2)$.  

    We have three cases.
    \begin{enumerate}
        \item The first case is when $G-X$ has at least 2 cliques, the order of the second largest clique in $G-X$ is $p$ and $2k\leq \epsilon p$. By the above discussion, we have $\dimt(G)\leq \dimt(G_1)+\dimt(G_2)+\dimt(G_3)\leq 2k+p\leq(1+\epsilon)p\leq (1+\epsilon)\dimt(G_2)\leq (1+\epsilon)\dimt(G)$. In this case, the algorithm returns the union of the $TH$-representations of $G_1$, $G_2$ and $G_3$, to get a $(1+\epsilon)$-factor approximation for $\dimt(G)$. This takes $2^{O(k^2\log k )}+O(n^4)$ time which is a consequence of the time taken to compute $\dimt(G_1)$ along with the time taken to find a 3-approximate cluster vertex deletion set.   
        \item The second case has 2 subcases:
        \begin{romanenumerate}
            \item  Consider the case when $G-X$ has at least 2 cliques and the order of the second largest clique in $G-X$ is $p$ satisfying $p<\frac{2k}{\epsilon}$. In this case, we have $\dimt(G)< 2k(1+\frac{1}{\epsilon})$. Observe that each clique in $G-X$ is of cardinality at most $\frac{2k}{\epsilon}$ except, possibly, for the largest clique $C_{max}$. By Lemma~\ref{cliqmax}, there is a subgraph $G'$ of $G$ such that $\dimt(G')=\dimt(G)$ and the largest clique in $G'-X$ has order at most $2^k(2k(1+\frac{1}{\epsilon})+1)\leq2^k(2k+1)(1+\frac{1}{\epsilon})$. By Lemma~\ref{struct1}, there is a graph $G''$ with $\dimt(G'')=\dimt(G')=\dimt(G)$ of order at most $\frac{4k}{\epsilon}\cdot(2^k+{\frac{2k}{\epsilon}})^{\frac{2k}{\epsilon}} (2^{\frac{4k^2}{\epsilon}(1+\frac{1}{\epsilon})}(k+1)^{\frac{4k^2}{\epsilon}(1+\frac{1}{\epsilon})}+1)+2^k(2k+1)(1+\frac{1}{\epsilon})+k$.
        \item Finally we have the case when $G-X$ has only one clique. In this case, we have $G=G_1\cap G_3$ and therefore, $\dimt(G)\leq 2k$. By Lemma~\ref{cliqmax}, there is a subgraph $G'$ of $G$ such that $\dimt(G')=\dimt(G)$ and the largest clique in $G'-X$ has order at most $2^k(2k+1)$. Therefore, the order of $G'$ is at most $2^k(2k+1)+k$. Let $G''=G'$.
        \end{romanenumerate}
        By the discussion in Section~\ref{prelim}, we can obtain a $TH$-representation for $G''$ by a bruteforce enumeration. And from $G''$ we can obtain an optimal $TH$-representation of $G'$ and subsequently an optimal $TH$-representation for $G$ in polynomial time.
    \end{enumerate} 
This finishes the proof.         
    \end{proof}
From the proof of Theorem~\ref{EPAS}, it is also clear that the problem is FPT with two parameters, the cluster vertex deletion number and the solution size. We record this in the following theorem. 
 \begin{theorem}\label{FPT}
         There is an algorithm that, given a graph \(G\) with $\dimt(G)=t$, constructs a $TH$-representation of \(G\) of size $t$ in time $f(k,t)\,|V(G)|^{O(1)},$ where \(k\) is the cluster vertex deletion number of \(G\), and \(f\) is a computable function.
     \end{theorem}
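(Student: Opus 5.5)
We first compute a cluster vertex deletion set $X$. An exact minimum one can be found in $O(2^k n^{O(1)})$ time by the standard branching on induced $P_3$'s. Alternatively, a 3-approximate one can be found in polynomial time, in which case $|X|\le 3k$ and every bound below changes only through $f$. Write $k$ for $|X|$ from here on. If $k=0$, the graph $G$ is a cluster graph, and Theorem~\ref{dimtgx} applied to $\overline{G}$ gives $\dimt(G)$. The corresponding representation is explicit and computable in polynomial time, so we may assume $k\ge 1$.

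The key observation is that the case analysis in the proof of Theorem~\ref{EPAS} never uses $\epsilon$ except to bound $p$, the order of the second largest clique. We therefore replace $\epsilon$ by the parameter $t$. If $G-X$ has at least two cliques, the proof of Theorem~\ref{EPAS} gives $p=\dimt(G_2)\le \dimt(G)=t$. If $G-X$ has only one clique, we set $p=0$ and note that $\dimt(G)\le 2k$ as shown there. In either case every clique of $G-X$ other than $C_{max}$ has order at most $t$. We then apply Lemma~\ref{cliqmax} to obtain $G'$ with $\dimt(G')=t$ whose largest clique in $G'-X$ has order at most $l\le 2^k(t+1)$. Next we apply Lemma~\ref{struct1} to $G'$ with $p\le t$, keeping at most $2^{tp}(k+1)^{tp}+1$ cliques per equivalence class. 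This yields an induced subgraph $G''$ with $\dimt(G'')=t$ of order bounded by
\[
N(k,t)=2t(2^k+t)^t\bigl(2^{t^2}(k+1)^{t^2}+1\bigr)+2^k(t+1)+k .
\]

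Two algorithmic points need care, since both lemmas are stated with $t=\dimt(G)$ known. We do not know $t$ in advance, so we iterate over $t=1,2,\dots$; we could also start from the lower bound $p$ and use the upper bound $2k+p$. For a guessed value $t$, the reductions are purely syntactic: retain $t+1$ vertices per neighbourhood class of $C_{max}$, and retain $2^{t^2}(k+1)^{t^2}+1$ cliques per $\sim$-class. We build $G''_t$ from the guess and run the brute force of Section~\ref{prelim} on $G''_t$, which tests whether $\dimt(G''_t)\le t$ in time $2^{O(N t\log N)}$. The main obstacle is to make sure that the reduction with a guess $t$ is sound in both directions. We must show that $\dimt(G''_t)\le t$ if and only if $\dimt(G)\le t$. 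The forward direction is immediate: $G''_t$ is an induced subgraph of $G$, and the class is hereditary. For the converse, the proofs of Lemma~\ref{cliqmax} and Lemma~\ref{struct1} actually show something stronger. Given any representation of size $t'\le t$ of the reduced graph, they extend it to $G$ with the same size. The pigeonhole steps only need the number of retained vertices or cliques to exceed the count of configurations for $t'$ threshold graphs, and this count is monotone in $t'$. So we would re-read these proofs with "$t$" meaning the guessed bound rather than the true dimension, and verify this in each claim, in particular Claim~\ref{claimth} and Claim~\ref{poly}.

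Finally, for the smallest $t$ that succeeds, we lift the representation of $G''_t$ back to $G$. We first undo the removals of Lemma~\ref{struct1} one clique at a time, using the extension of Claim~\ref{claimt}. We then undo the removals of Lemma~\ref{cliqmax}, copying the weights of $u_F$. Each step is polynomial, so the lifting takes polynomial time overall. The number of guesses is at most $2k+p$, and a guess $t$ is only tried when $t\le\dimt(G)$. Hence the total running time is $f(k,t)|V(G)|^{O(1)}$ with $f(k,t)=t\cdot 2^{O(N(k,t)\,t\log N(k,t))}$, which is computable. The returned representation has size exactly $\dimt(G)$ by minimality of the successful guess.
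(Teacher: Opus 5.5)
Your proposal is correct and follows the paper's route: the paper's proof is only the remark that the kernelization in the proof of Theorem~\ref{EPAS} (Lemma~\ref{cliqmax} followed by Lemma~\ref{struct1}, now with $p=\dimt(G_2)\le t$ replacing the $\epsilon$-dependent bound on $p$), followed by brute force on the kernel and polynomial-time lifting, runs in time $f(k,t)\,|V(G)|^{O(1)}$. Your explicit treatment of the unknown $t$ is a useful addition, with one caveat: start the loop at $p$, as you suggest, because for guesses below $p$ your retention count $2^{t^2}(k+1)^{t^2}+1$ need not exceed the number of configurations of order-$p$ cliques; the loop can also be avoided entirely, as the paper implicitly does in Theorem~\ref{EPAS}, by running both reductions once with the computable upper bound $2k+p\ge\dimt(G)$ in place of $t$.
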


 \bibliography{lipics-v2021-sample-article}

@article{chacko2021representing,
  title={Representing Graphs as the Intersection of Cographs and Threshold Graphs}, volume={28},  DOI={10.37236/9110}, number={3}, journal={The Electronic Journal of Combinatorics}, author={Chacko, Daphna and Francis, Mathew C.}, year={2021}, month={Jul.}, pages={P3.11} 
}

@article{Para,
author={Adiga, Abhijin
and Chitnis, Rajesh
and Saurabh, Saket},
title={Parameterized Algorithms for Boxicity},
journal={Algorithms and Computation},
year={2010},
}

@article{Kratochvil, title={Intersection dimensions of graph classes}, 
author={Kratochvíl, Jan and Tuza, Zsolt},
journal={Graphs and Combinatorics}, 
volume={10}, 
number={2}, 
pages={159--168},
year={1994}, 
publisher={Springer} }

@article{francis2025some,
  title={Some Bounds on the Threshold Dimension of Graphs},
  author={Francis, Mathew C and Majumder, Atrayee and Mathew, Rogers},
  volume={32},
  journal={The Electronic Journal of Combinatorics},
  pages={P1--33},
  year={2025},
  DOI={10.37236/12331},
  month={Feb.}, pages={P1.33}
}

@book{mahadev1995threshold,
  title={Threshold graphs and related topics},
  author={Mahadev, Nadimpalli VR and Peled, Uri N},
  volume={56},
  year={1995},
  publisher={Elsevier}
}

@article{fellows2012well,
  title={Well quasi orders in subclasses of bounded treewidth graphs and their algorithmic applications},
  author={Fellows, Michael R and Hermelin, Danny and Rosamond, Frances A},
  journal={Algorithmica},
  volume={64},
  number={1},
  pages={3--18},
  year={2012},
  publisher={Springer}
}

@article{chvatal1973set, 
title={Set-packing and threshold graphs}, 
author={Chv{\'a}tal, V{\'a}cl{\'a}v and Peter L. Hammer}, 
journal={Res. Rep., Comput. Sci. Dept., Univ. Waterloo}, 
year={1973} }

@article{cozzens1989dimensional,
  title={On dimensional properties of graphs},
  author={Cozzens, Margaret Barry and Roberts, Fred S},
  journal={Graphs and Combinatorics},
  volume={5},
  number={1},
  pages={29--46},
  year={1989},
  publisher={Springer}
}

@article{doi:10.1137/0605055,
author = {Cozzens, Margaret B. and Leibowitz, Rochelle},
title = {Threshold Dimension of Graphs},
journal = {SIAM Journal on Algebraic Discrete Methods},
volume = {5},
number = {4},
pages = {579-595},
year = {1984},
doi = {10.1137/0605055},

URL = { 
    
        https://doi.org/10.1137/0605055
    
    

}
}

@article{bruhn2016structural,
  title={Structural parameterizations for boxicity},
  author={Bruhn, Henning and Chopin, Morgan and Joos, Felix and Schaudt, Oliver},
  journal={Algorithmica},
  volume={74},
  number={4},
  pages={1453--1472},
  year={2016},
  publisher={Springer}
}

@inproceedings{raschle1995recognition,
  title={Recognition of graphs with threshold dimension two},
  author={Raschle, Thomas and Simon, Klaus},
  booktitle={Proceedings of the Twenty-seventh Annual ACM Symposium on Theory of computing},
  pages={650--661},
  year={1995}
}

@techreport{Ma,
    author = {Tze-Heng Ma},
    title = { On the threshold dimension 2 graphs},
    institution = {Institute of Information 
Science, Nankang, 
Taipei, 
Taiwan},
    year = {September 1993}
}

@article{sterbini1998n3,
  title={An {O}(n3) time algorithm for recognizing threshold dimension 2 graphs},
  author={Sterbini, Andrea and Raschle, Thomas},
  journal={Information Processing Letters},
  volume={67},
  number={5},
  pages={255--259},
  year={1998},
  publisher={Elsevier}
}

@article{Yannakakis,
author = {Yannakakis, Mihalis},
title = {The Complexity of the Partial Order Dimension Problem},
journal = {SIAM Journal on Algebraic Discrete Methods},
volume = {3},
number = {3},
pages = {351--358},
year = {1982},
doi = {10.1137/0603036}
}
\newpage
 \appendix
\section{Proofs of some Claims and Lemmas}
Since Lemma~\ref{integth} follows from Corollary~\ref{integth1}, we will be proving Corollary~\ref{integth1}.
\begin{proof}[Proof of Corollary~\ref{integth1}]

    By definition~\ref{orderdef}, there is an order $v_1, v_2, \ldots ,v_n$ of the vertices of $G$ such that a vertex at a particular position is either adjacent to all the vertices appearing before it in the order or it is not adjacent to any of the vertices appearing before it in the order. For $i\in [n]$, define $g(v_i)=n-i$ and let $w(v_i)=2n-g(v_i)$ if $v_i$ is adjacent to all the vertices appearing before it in the order and $w(v_i)=g(v_i)$ if $v_i$ is not adjacent to any of the vertices appearing before it in the order. Let $c(v_i)=0$ if $w(v_i)<n$ and $c(v_i)=1$ if $w(v_i)\geq n$. It is clear that $\Gamma(2n,w)$ is the threshold representation of $G$.  
    
    Suppose that the elements of $g(V)$ are ordered as $x_1< x_2<\ldots <x_n$. In this order, take the subsequence corresponding to the vertices in $X$ as $y_1<y_2<\ldots <y_k$. Define $y_{k+1}=1+ x_n$ and $y_0= x_1-1$. Define the function $z$ on the elements in $g(V)$ as follows: $$z(x)=\begin{cases}
        j & \t{ if } x=y_j \t{ for some }j\in [k]\\
        \frac{x-y_{j-1}}{y_{j}-y_{j-1}}+j -1& \t{ if } y_{j-1}<x< y_{j} \t{ for some }j\in [k+1]
    \end{cases}.$$ Note that $z$ preserves the ordering of $g(V)$ and takes value less than $k+1$, i.e., we have $z(x_1)< z(x_2)<\ldots <z(x_n)<k+1$.  We define the function $w^*$ on the vertices of $G$ as follows: 
    $$w^*(v)=\begin{cases}
        z(g(v)) & \t{ if }w(v)<n\\
        2k+3-z(g(v)) & \t{ if }  w(v)\geq n
    \end{cases}
    $$ 
    Observe that if $w(v)< n$, we have $w^*(v)< k+1$ and if $w(v)\geq n$, we have $w^*(v)> k+2$. 
    Also observe that for every vertex $v$ in $V$, we have $g(w^*,v)=z(g(v))$. Therefore, it is clear that $g(w^*,\cdot)$ satisfies requirements (i), (ii) and (iii). We will show that $\Gamma(2k+3,w^*)$ is a valid threshold representation of $G$.

    Suppose for distinct vertices $u,v\in V$ we have $uv\notin E$. Then either $w(u)<n$ or $w(v)<n$. Without loss of generality, assume $w(u)<n$. If we also have $w(v)<n$ then $ w^*(u)+ w^*(v)=z(g(u))+z(g(v))<k+1+k+1= 2k+2<2k+3$ and we are done. Therefore, assume that $w(v)\geq n$. Since $w(u) +w(v)<2n$, we have $2n-w(v)>w(u)$. It follows that $g(v)>g(u)$ and, consequently, $z(g(v))>z(g(u))$. Hence, $w^*(u)+ w^*(v)=z(g(u))+2k+3-z(g(v))<2k+3$. 

    On the other hand, if for distinct vertices $u,v\in V$ we have $uv\in E$, then $w(u)+w(v)\geq 2n$. This means that either $w(u)\geq n$ or $w(v)\geq n$. Without loss of generality, assume that $w(u)\geq n$. If we also have $w(v)\geq n$, then $w^*(u)+ w^*(v)> k+2+k+2=2k+4>2k+3$ as discussed above. Therefore, assume that $w(v)<n$. Observe that $2n-w(u)\leq w(v)$. Consequently, $g(u)\leq g(v)$. Since $g$ takes distinct values, we must have $g(u)<g(v)$. It follows that $z(g(u))<z(g(v))$. Therefore, we have $w^*(u)+ w^*(v)=2k+3-z(g(u))+z(g(v))> 2k+3$.  
\end{proof}
\begin{proof}[Proof of Claim~\ref{claimt} case 2(iii)]
    Assume that $c(w_i,u)=1$ and $c(w_i,f(q(v)))=0$. This means that 
    \begin{equation}\label{eq6}
        c(w_i,f(u))=1 \t{ and }c(w_i,v)=0
    \end{equation}
    because $u$ and $f(u)$ have the same configuration in $\Gamma(2k+3,w_i)$ and $v,q(v)$ and $f(q(v))$ have the same configuration in $\Gamma(2k+3,w_i)$. 
    By Claim~\ref{conf} and expression~(\ref{eq6}), this means that $g(w_i,f(q(v)))\geq g(w_i,f(u))$. 
    
    Since $c(w_i,u)=1$ and $c(w_i,f(q(v)))=0$, we have \begin{equation}\label{eq7}
    g(w_i,u)=2k+3-w_i(u)>w_i(f(q(v)))=g(w_i,f(q(v)))\geq g(w_i,f(u)).
    \end{equation} 
    Since  $u$ and $f(u)$ have the same configuration in $\Gamma(2k+3,w_i)$, it follows from expression~(\ref{eq7}) that 
    \begin{equation}\label{eq8}
        \lceil g(w_i,u)\rceil =\lceil g(w_i,f(u))\rceil =\lceil g(w_i,f(q(v)))\rceil=\lceil g(w_i,v)\rceil=\lceil g(w_i,q(v))\rceil.
    \end{equation} By Claim~\ref{conf} and expression~(\ref{eq8}), we have \begin{equation}\label{eq9}
        g(w_i,q(v))= g(w_i,u)>g(w_i,f(q(v))).
    \end{equation} Therefore, by Equation~(\ref{eq}), we have $g(w_i,v)=g(w_i,q(v))-\delta_i $. By the definition of $g(w_i,\cdot)$ and expression~(\ref{eq9}), we have $w_i(v)=g(w_i,v)=g(w_i,q(v))-\delta_i= g(w_i,u)-\delta_i=2k+3-w_i(u)-\delta_i$. Therefore, $w_i(u)+w_i(v)=w_i(u)+2k+3-w_i(u)-\delta_i=2k+3-\delta_i<2k+3$.
\end{proof}
\begin{claimproof}[Proof of Claim~\ref{poly}]
     Let $|\mathcal{C}|=c$. We first describe a polynomial time procedure using which we can determine all the equivalence classes of $\mathcal{C}$ and also determine for every equivalent pair $C,C'\in \mathcal{C}$ a labelling to its vertices such that if $v\in C$ and $v'\in C'$ are labeled the same then $N_G(v)\cap X=N_G(v')\cap X$. Suppose the number of equivalence classes of $\mathcal{C}$ is $r$.  
\begin{itemize}
    \item Initialize $\mathcal{C}=$ cliques in $G'-X$ (can be done efficiently by observing the neighborhoods). Initialize all the vertices of $G'-X$ with the label NULL. Initialize $i=0$.
    \item while $\mathcal{C}\neq \emptyset$, do:
    \begin{itemize}
    \item $i=i+1$.
    \item Pick a clique $C_i\in \mathcal{C}$ arbitrarily and assign an arbitrary labeling to its vertices from the integers in $[l]$. Initialize $\mathcal{U}_{C_i}=\{C_i\}$.
    \item For every clique $C'\in \mathcal{C}\setminus \{C_i\}$ that has the same order as $C_i$, do the following: 
    \begin{itemize}
        \item Initialize a flag variable with 1.
        \item For every vertex $v\in C_i$ 
        \begin{itemize}
            \item Find a vertex $v'\in C'$ (not found in a previous step) such that $N_G(v)\cap X=N_G(v')\cap X$. If such a vertex is found, then label $v'$ with the same label as $v$ (at most one vertex is labelled this way). 
            \item If no such vertex is found then set flag as 0 and stop. 
        \end{itemize}
        \item If the flag is 0, then reassign all the labels of the vertices of $C'$ with NULL. 
        \item If the flag is 1, then do $\mathcal{U}_{C_i}=\mathcal{U}_{C_i}\cup \{C'\}$.
    \end{itemize}
    \item $\mathcal{C}=\mathcal{C}\setminus \mathcal{U}_{C_i}$.
\end{itemize}
    \item Return $\{\mathcal{U}_{C_1},\mathcal{U}_{C_2},\ldots , \mathcal{U}_{C_r}\}$.
    \end{itemize}
     Now we construct a $c\times tl$ matrix as follows: the rows of the matrix are labeled with the cliques in $\mathcal{C}$ and the columns are labeled with $t$ many copies of the labels that the vertices of the cliques received by the above procedure. Let $C\in \mathcal{C}$ and $v\in C$ be a vertex that received the label $j$ in the above procedure. Then for $i\in [t]$, the $(C,j+(i-1)l)$-th cell is filled with the tuple $(c(w_i,v),\lceil g(w_i,v)\rceil)$. All the other cells are filled with a filler value $(*,*)$. This matrix has size at most $n^3$ and therefore, takes polynomial time to fill. We can now identify two equivalent cliques $C_1$ and $C_2$ that are in the same configuration and in $\langle\tilde{C}\rangle$ by going down the rows of this matrix and checking if two rows are equal. This again takes polynomial time. Once we have identified $C_1$ and $C_2$ we can assign the weights of $\tilde{C}$ according to Equation~(\ref{eq}) and we are done.
\end{claimproof}

\end{document}